\documentclass[american,aps,prx,reprint,superscriptaddress, longbibliography,floatfix]{revtex4-2}
\usepackage{amsmath}
\usepackage{latexsym}
\usepackage{amssymb}
\usepackage{graphicx}
\usepackage[colorlinks=true, citecolor=blue, urlcolor=blue]{hyperref}
\usepackage{float}
\usepackage{amsfonts}
\usepackage{textcomp}
\usepackage{mathpazo}
\usepackage{comment}
\usepackage{xr}
\usepackage{xfrac}
\usepackage{bbm}
\usepackage{xcolor}
\definecolor{myurlcolor}{rgb}{0,.5,.5}
\definecolor{mycitecolor}{rgb}{0,.6,0}
\definecolor{myrefcolor}{rgb}{2,0,0}
\usepackage{hyperref}
\hypersetup{colorlinks,
linkcolor=myrefcolor,
citecolor=mycitecolor,
urlcolor=myurlcolor}

\usepackage[draft]{fixme}
\usepackage{amsmath,bbm}
\usepackage{graphicx}
\usepackage{amsfonts}
\usepackage{amssymb}
\usepackage{amsmath, amssymb, amsthm,verbatim,graphicx,bbm}
\usepackage{mathrsfs}
\usepackage{color,xcolor,longtable}

\usepackage{xr}
\makeatletter
\newcommand*{\addFileDependency}[1]{
  \typeout{(#1)}
  \@addtofilelist{#1}
  \IfFileExists{#1}{}{\typeout{No file #1.}}
}
\makeatother

\newcommand*{\myexternaldocument}[1]{
    \externaldocument{#1}
    \addFileDependency{#1.tex}
    \addFileDependency{#1.aux}
}

\myexternaldocument{manuscript_PRL}

\newcommand{\beq}[0]{\begin{equation}}
\newcommand{\eeq}[0]{\end{equation}}

\newcommand{\one}{\leavevmode\hbox{\small1\normalsize\kern-.33em1}}

\def\be{\begin{equation}}
\def\ee{\end{equation}}
\def\ben{\begin{eqnarray}}
\def\een{\end{eqnarray}}
\def\eea{\end{array}}
\def\bea{\begin{array}}

\newcommand{\Tr}[1]{\mathrm{Tr}#1}
\newcommand{\bei}{\begin{itemize}}
\newcommand{\eei}{\end{itemize}}
\newcommand{\ket}[1]{|#1\rangle}
\newcommand{\bra}[1]{\langle#1|}

\newcommand{\proj}[1]{\ket{#1}\!\bra{#1}}
\newcommand{\braket}[2]{\langle{#1}|{#2}\rangle}

\renewcommand{\emph}[1]{\textbf{#1}}

\makeatletter
\newtheorem*{rep@theorem}{\rep@title}
\newcommand{\newreptheorem}[2]{%
\newenvironment{rep#1}[1]{%
 \def\rep@title{#2 \ref{##1}}%
 \begin{rep@theorem}}%
 {\end{rep@theorem}}}
\makeatother

\theoremstyle{plain}
\newtheorem{thm}{Theorem}
\newtheorem*{thm*}{Theorem}
\newreptheorem{thm}{Theorem}
\newtheorem{lem}{Lemma}

\newtheorem{prop}[thm]{Proposition}
\newtheorem{cor}[thm]{Corollary}

\newtheorem{defn}[thm]{Definition}

\theoremstyle{definition}

\theoremstyle{remark}

\newcommand{\blk}{\color{black}}

\usepackage[T1]{fontenc}

\begin{document}

\title{Polytopic Quantum Resource Theories: Geometry and Structures}
\author{Moein Naseri}
\email{moein.naseri@ug.edu.pl}
\affiliation{International Centre for Theory of Quantum Technologies, Uniwersytet Gdański, 4 prof. Marii Janion Street,
80-309, Gda\'nsk, Poland}
\author{Chirag Srivastava}
\affiliation{Institute of Informatics, Faculty of Mathematics, Physics and Informatics,
University of Gdansk, Wita Stwosza 57, 80-308 Gdansk, Poland}
\author{Chandan Datta}
\affiliation{Department of Physical Sciences, Indian Institute of Science Education
and Research Kolkata,\\ Mohanpur, West Bengal 741246, India}
\author{John H. Selby}
\affiliation{International Centre for Theory of Quantum Technologies, Uniwersytet Gdański, 4 prof. Marii Janion Street,
80-309, Gda\'nsk, Poland}
\author{Shubhayan Sarkar}
\affiliation{Institute of Informatics, Faculty of Mathematics, Physics and Informatics,
University of Gdansk, Wita Stwosza 57, 80-308 Gdansk, Poland}

\begin{abstract}	
Quantum resource theories provide a unifying framework to quantify, compare, and manipulate quantum resources under well-defined operational constraints. Here, we consider any resource theory where the set of free states can be expressed as a convex combination of a set of quantum states, referred to as extremal states and name them as polytopic quantum resource theories (PQRT). These include some of the most studied resource theories, such as coherence and magic. We formulate a novel tensorial representation of PQRTs that reveals the underlying geometry of these theories and provides insight into the origin of the resources. We further address a fundamental question in resource theories that when two theories should be regarded as physically equivalent, and to this purpose we introduce notions of homomorphism and isomorphism that compare both the structure of free states and the allowed transformations. Using the tools we develop, we find results revealing the geometrical and structural foundations of such theories. Interestingly, we find that all polytopic resource theories with a fixed number of pure extremal points are equivalent under a physical map, up to normalisation. Additionally, we introduce linearly independent polytopic resource theories (resource theory of ``basis-non-convexity''), where the set of extremal free states forms a basis of the quantum density operators. We further study the categorical structures of PQRTs beyond single systems.
\end{abstract}


\maketitle

\section{Introduction}

Identifying and characterising the quantum states and operations that enable advantages in information-processing tasks stands as a central challenge in quantum information science.
A well-known example is entanglement \cite{RMP_Entanglement}, which serves as a key resource for protocols such as quantum teleportation \cite{Teleportation_Bennett}, superdense coding \cite{Dense_Coding}, and quantum key distribution \cite{QKD_Ekert}. However, entanglement is not the only useful quantum resource. In recent years, several other quantum features—such as coherence \cite{Coherence_Baumgratz, RMP_Coherence,selby2020compositional}, imaginarity \cite{Hickey_2018, Resource_Imaginarity, VarunKondra_2023}, purity \cite{Purity_Horodecki,Purity_Gour,Purity_Streltsov}, asymmetry \cite{Asymmetry_Gour}, magic \cite{Magic_Veitch}, and others \cite{RMP_Resource}—have been identified as valuable resources for different quantum information processing tasks. 

Over the years, a systematic framework known as quantum resource theory has been developed to study such quantum resources in a mathematically rigorous way \cite{Resource_Coecke,Resource_Fritz,RMP_Resource}. This framework provides a unified approach for understanding different kinds of quantum resources and how they can be used optimally for various information-processing tasks. The central idea is to characterize quantum systems under a restricted set of operations, called free operations. Since these operations form only a subset of all physically allowed quantum operations, they can generate only certain states, known as free states. Any state that cannot be prepared using free operations is considered a resource state. In this way, a resource theory clearly distinguishes between free and resource states based on operational limitations. For example, in the resource theory of entanglement, if the allowed operations are local operations and classical communication (LOCC) \cite{NLWE_Bennett,Nielsen_PRL}, then entangled states are precisely those that cannot be created using LOCC alone and are therefore regarded as valuable resources \cite{RMP_Entanglement}.

In recent years, many quantum resource theories have been developed and studied extensively, each focusing on a specific type of resource \cite{Coherence_Baumgratz, RMP_Coherence, Hickey_2018, Resource_Imaginarity,VarunKondra_2023,Purity_Horodecki,Purity_Gour,Purity_Streltsov,Asymmetry_Gour,Magic_Veitch,RMP_Resource}. However, relatively few works have aimed at constructing a general framework that can encompass multiple resources within a single unified approach \cite{Horodecki_GQRT,Brandao_GQRT,Korzekwa_GQRT,Liu_GQRT,Takagi_GQRT,Takagi1_GQRT,Fang_GQRT,Regula_GQRT,Vijayan_GQRT,Kuroiwa_GQRT,Sarkar_GQRT}, and none (that we are aware of) allow for a direct comparison of resource theories. Such a general resource-theoretic framework would allow us to study different resources under one umbrella, rather than treating each one separately. Motivated by this idea, we introduce a general framework that we call polytopic quantum resource theories (PQRT). In this approach, the set of free states is defined as the convex combination of a fixed set of quantum states, and the free operations are those that preserve this set of free states. Interestingly, this framework naturally includes several well-known resource theories, such as coherence and magic, as special cases. 

We first present a number of fundamental results for single quantum systems capturing the structural and geometrical foundations of such resource theories, and then extend the framework to composite systems. Importantly, providing an abstract framework in which we can study different resource theories, also provides a way to directly compare resource theories. We introduce a novel tensorial framework for PQRTs that exposes their underlying structure. We further provide notions of resource theory homomorphisms and isomorphisms, both in the single system and composite system cases. To give a simple example, this allows us to show that the choice of basis in defining the resource theory of coherence leads to isomorphic resource theories. 

Similar approaches have been considered in particular to generalise the resource theory of coherence \cite{plenio,Das_2020,Srivastava_2021,banerjee2021quantumcoherenceincompleteset,selby2020compositional}, where free states are defined as a convex combination of various different quantum states rather than the standard basis vectors. Interestingly, we find that all PQRTs with a fixed number of pure extremal points are CP-isomorphic. This means that for two such resources, if the number of extremal free states is the same, then finding relevant quantities in one resource theory is enough to find the same quantities in the other. Thus, all such PQRTs are equivalent under a physical map (up to normalisation). Moreover, this implies that results in some of the previous generalisations of coherence \cite{plenio, Das_2020}, follow straightaway from the resource theory of coherence. As an additional example, we introduce linearly independent PQRTs.

 Now we turn to the basic definition which underpins our work:\blk
\begin{defn}[Polytopic Quantum Resource Theory, PQRT]
   Let $\mathcal{H}$ be a $d$-dimensional Hilbert space and $S_{ver}^{N}:=\{\nu_{i}: \nu_i\in D(\mathcal{H}), i\in\{1,...,N\} \}$ where each element in $S_{ver}^{N}$ cannot be written as a convex combination of the others. We call the elements $\nu_i$ to be vertexal states. We define the PQRT with respect to the set of vertices $S_{ver}^{N}$ as follows.  
\begin{itemize}
    \item \textbf{Set of Free States:} The set of free states is the convex combination of the vertexal states in $S_{ver}^{N}$ i.e.
    \begin{align}
        F_{s}= \text{conv}(S_{ver}^{N}).
    \end{align} 
    
    \item \textbf{ Set of Free Operations:} We consider the set of free operations $F_{O}$ to be maximal in the sense that any completely positive trace preserving (CPTP) operator mapping a free state to another is a free operation i.e.
    \begin{align}
        F_O = \{\Lambda \in \text{CPTP}: \Lambda(\sigma)\in F_{s}, \forall \sigma \in F_{s} \}.
    \end{align}
\end{itemize}
\end{defn}
It is clear from the definition that if $\Lambda \in F_O$ and $\sigma \in F_s$ then:
\begin{align}
    \Lambda(\sigma)=\sum_{j=1}^{N} p_{j}^{\sigma,\Lambda}\nu_{j}
\end{align}
where  $\nu_{j}\in S_{ver}^N$ \blk and $\{p_{j}^{\sigma,\Lambda}\}_j$ is a probability distribution depending on the choice of the free map $\Lambda$ and the free state $\sigma$ such that $\sum_jp_{j}^{\sigma,\Lambda}=1$. If we substitute $F_O$ with $F_O^{CP} = \{\Gamma \in \text{CP}: \Gamma(\sigma)\in \alpha F_{s}, \alpha>0, \forall \sigma \in F_{s} \}$ we will have a weaker form of the PQRT in the sense that the free operations are completely positive but they may not preserve the trace. Stochastic quantum maps (completely positive trace non-increasing maps) are the examples of such operations. 

If the cardinality of $S_{ver}$ is $|\mathbb{R}|$, where $\mathbb{R}$ denotes the real line, then this includes resource theories such as entanglement theory, in which the set of vertexal states consists of pure separable states. Indeed, this set forms a continuous manifold embedded in $\mathbb{R}^d$, for some manifold dimension $d$.

The definition of free operations demands that every free state is mapped to a free state, however, thanks to linearity of CPTP maps, and convexity of the set of free states, we only actually need to check the action on vertexal states. That is:
\begin{lem}\label{lem1new}
For a CPTP map $\Lambda$ we have that 
\[\Lambda(\sigma)\in F_s\ \ \forall \sigma \in F_s\ \ \iff\ \ \Lambda(\nu_l)\in F_s\ \ \forall \nu_l\in S_{ver}^N.\]
\end{lem}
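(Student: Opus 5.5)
The proof is a direct two-direction argument that uses only the linearity of $\Lambda$ and the convexity of $F_s$.

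For the forward implication ($\Rightarrow$), each vertexal state $\nu_l\in S_{ver}^N$ is itself a free state: it is the trivial convex combination with weight $1$ on $\nu_l$ and $0$ elsewhere, so $\nu_l\in\text{conv}(S_{ver}^N)=F_s$. Applying the hypothesis with $\sigma=\nu_l$ gives $\Lambda(\nu_l)\in F_s$ at once.

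For the reverse implication ($\Leftarrow$), take an arbitrary $\sigma\in F_s$. By the definition of $F_s$, write
\begin{align}
\sigma=\sum_{l=1}^{N} q_l\,\nu_l
\end{align}
with $\{q_l\}_l$ a probability distribution. Linearity of $\Lambda$ gives $\Lambda(\sigma)=\sum_l q_l\,\Lambda(\nu_l)$. By hypothesis each $\Lambda(\nu_l)\in F_s$, so we may write $\Lambda(\nu_l)=\sum_j r_{j}^{(l)}\nu_j$ with $\{r_j^{(l)}\}_j$ a probability distribution for each $l$. Substituting,
\begin{align}
\Lambda(\sigma)=\sum_{j}\Big(\sum_l q_l\, r_j^{(l)}\Big)\nu_j .
\end{align}
The coefficients $p_j^{\sigma,\Lambda}:=\sum_l q_l r_j^{(l)}$ are nonnegative. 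They also sum to one, since $\sum_j\sum_l q_l r_j^{(l)}=\sum_l q_l=1$. Hence $\Lambda(\sigma)\in\text{conv}(S_{ver}^N)=F_s$. Equivalently, one can just note that a convex set is closed under convex combinations.

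There is no real obstacle; the only care needed is the bookkeeping of the two-level convex combination. Complete positivity and trace preservation play no role beyond linearity. If one wanted to cover the case where $S_{ver}$ is infinite, as mentioned before the lemma, the same argument works. Every element of the convex hull is by definition a finite convex combination of vertices, so the sums above remain finite and only finitely many vertices appear for any given $\sigma$.
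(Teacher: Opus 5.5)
Your proof is correct and follows the same argument as the paper. One direction holds because each vertex lies in $F_s$. For the other, you expand $\sigma$ as a convex combination of vertices, apply linearity of $\Lambda$, and use convexity of $F_s$; your version writes out the two-level convex combination explicitly, and your labelling of $\Rightarrow$ and $\Leftarrow$ is the correct one, whereas the paper's text swaps the two labels.
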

\proof The $\Leftarrow$ direction is trivial as $\nu_l \in F_s$. The $\Rightarrow$ direction follows from the fact that $F_s=\text{conv}(S_{ver}^N)$ we can write $\sigma=\sum_l p^\sigma_l \nu_l$ for some probability distribution $p^{\sigma}_l$, and, as $\Lambda$ is linear we then have that $\Lambda(\sigma)=\Lambda(\sum_l p^\sigma_l \nu_l)=\sum_lp_l^\sigma \Lambda(\nu_l)$. Now, by assumption, $\Lambda(\nu_l)\in F_s$ and so we have a convex combination of things in $F_s$ which itself must live in $F_s$.
\endproof

It then immediately follows that one can check whether or not a given CPTP map is free by the following linear program:
\begin{cor}A given CPTP map $\Lambda$ is a free operation if and only if the following is feasible
    \begin{align}
        \text{Find}\ & \{{p_l^i}\}_{i,l=1,...,N}\\ \nonumber
        \text{s.t.}\ & p_l^i \geq 0 \quad \forall i,l,\\ \nonumber
        & \sum_l p_l^i = 1 \quad \forall i, \\ \nonumber
        & \sum_l p_l^i \nu_l = \Lambda(\nu_i) \quad \forall i.
    \end{align}
\end{cor}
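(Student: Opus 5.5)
The plan is to reduce the corollary directly to Lemma~\ref{lem1new}. By the definition of $F_O$, a CPTP map $\Lambda$ is free exactly when $\Lambda(\sigma)\in F_s$ for all $\sigma\in F_s$. Lemma~\ref{lem1new} says this is equivalent to $\Lambda(\nu_i)\in F_s$ for every vertexal state $\nu_i\in S_{ver}^N$, $i=1,\dots,N$. So the remaining task is to show that the condition ``$\Lambda(\nu_i)\in F_s$ for all $i$'' is the same as feasibility of the stated program.

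Since $F_s=\text{conv}(S_{ver}^N)$ and $S_{ver}^N$ is finite, membership $\Lambda(\nu_i)\in F_s$ means there is a probability vector $(p^i_1,\dots,p^i_N)$ with $\Lambda(\nu_i)=\sum_l p^i_l\nu_l$. Here a probability vector means $p^i_l\ge 0$ and $\sum_l p^i_l=1$. I will prove the two directions separately.

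\emph{Freeness implies feasibility.} For each $i$, choose such a vector. Collecting these vectors gives the family $\{p^i_l\}_{i,l}$, and it satisfies all three constraints.

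\emph{Feasibility implies freeness.} Suppose a feasible family is given. Then for each $i$ the row $\{p^i_l\}_l$ is a probability vector, so the third constraint exhibits $\Lambda(\nu_i)$ as a convex combination of vertices. Hence $\Lambda(\nu_i)\in F_s$ for all $i$, and Lemma~\ref{lem1new} gives that $\Lambda$ is free.

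I would also remark why this is a linear program. The unknowns $p^i_l$ appear only linearly, because the $\nu_l$ and $\Lambda(\nu_i)$ are fixed matrices. The operator equality is a finite set of real linear equations, one for each matrix entry in some fixed basis of Hermitian operators, and the other constraints are linear equalities and inequalities. The variables also decouple across $i$, so the program splits into $N$ independent feasibility problems.

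I do not expect a real obstacle. The only point needing care is that the convex hull of a finite set consists precisely of its finite convex combinations, so membership in $F_s$ is witnessed exactly by such a probability vector. Note also that the extremality assumption on the $\nu_i$ is not needed here: the coefficients need not be unique, and feasibility only asks that some valid family exists.
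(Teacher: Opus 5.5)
Your proposal is correct and follows the paper's approach. The paper presents this corollary as an immediate consequence of Lemma~\ref{lem1new}, with the constraints encoding that each $\Lambda(\nu_i)$ is a convex combination of the vertices; you simply make both directions of that reduction explicit.
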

We also have:
\begin{cor} \label{Transformation-Check-SDP}
  The transformation $\rho\to \sigma$ is possible in the PQRT $\mathcal{R}$ if and only if the following, which is an SDP, is feasible:
  \begin{align}
        \text{Find}\ & \Lambda \in CPTP\\ \nonumber
        \text{s.t.}\ & \Lambda(\rho)=\sigma,\\ \nonumber
        & \Lambda(\nu_i)\in F_{s} \quad \forall i.
    \end{align}
\end{cor}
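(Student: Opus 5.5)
The plan has two parts: first show that feasibility of the program is equivalent to $\rho\to\sigma$ being achievable by a free operation, then check that the program really is an SDP. The equivalence will rest on Lemma~\ref{lem1new}, which lets us replace the condition ``$\Lambda(\tau)\in F_s$ for all $\tau\in F_s$'' by the finitely many conditions $\Lambda(\nu_i)\in F_s$, $i=1,\dots,N$. Throughout, ``the transformation $\rho\to\sigma$ is possible'' is read as: there exists $\Lambda\in F_O$ with $\Lambda(\rho)=\sigma$.

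\emph{Step 1 (equivalence).} Suppose $\Lambda\in F_O$ satisfies $\Lambda(\rho)=\sigma$. Then $\Lambda$ is CPTP, and since each $\nu_i\in F_s$ we get $\Lambda(\nu_i)\in F_s$, so $\Lambda$ is a feasible point. Conversely, a feasible $\Lambda$ is CPTP and maps every vertex into $F_s$. By Lemma~\ref{lem1new} it then maps all of $F_s$ into $F_s$, so $\Lambda\in F_O$, and $\Lambda(\rho)=\sigma$ holds by construction.

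\emph{Step 2 (SDP form).} I would parametrize $\Lambda$ by its Choi operator $J_\Lambda\in L(\mathcal{H}\otimes\mathcal{H})$.
\begin{itemize}
\item Complete positivity becomes $J_\Lambda\succeq 0$.
\item Trace preservation becomes the linear equality $\mathrm{Tr}_{\text{out}}J_\Lambda=\mathbb{1}$.
\item The action of the map is $\Lambda(X)=\mathrm{Tr}_{\text{in}}[(X^T\otimes\mathbb{1})J_\Lambda]$, which is linear in $J_\Lambda$. Hence $\Lambda(\rho)=\sigma$ is a linear equality constraint.
\end{itemize}
For each condition $\Lambda(\nu_i)\in F_s=\mathrm{conv}(S_{ver}^N)$ I would introduce auxiliary variables $p^i_l$, exactly as in the preceding corollary. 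These enter through the linear constraints $p^i_l\ge 0$, $\sum_l p^i_l=1$ and $\Lambda(\nu_i)=\sum_l p^i_l\nu_l$. All constraints are then affine equalities, or nonnegativity and PSD constraints, in the joint variables $(J_\Lambda,\{p^i_l\})$. This is precisely a feasibility SDP; the scalar constraints can be embedded as diagonal blocks of a single PSD matrix if a standard form is wanted.

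The only point needing care is Step 2, specifically that ``$\in F_s$'' is representable by SDP constraints. This relies on $N$ being finite, so that membership in the polytope is a finite set of linear constraints. For the continuum case mentioned earlier this fails, and the statement would have to be restricted to finite $S_{ver}^N$, as in the definition.
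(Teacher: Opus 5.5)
Your proof is correct and follows essentially the same route as the paper. You reduce freeness to the vertex conditions via Lemma~\ref{lem1new}, then express complete positivity, trace preservation, $\Lambda(\rho)=\sigma$ and $\Lambda(\nu_i)\in F_s$ (through the auxiliary weights $p^i_l$) as PSD and affine constraints on the Choi operator $J_\Lambda$, and your remark that finiteness of $S_{ver}^N$ is what makes this an SDP is a useful point that the paper leaves implicit.
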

We refer to Appendix for the proofs a fidelity-based characterization of these SDPs in the form of an optimization problem.

Often we want to go beyond just considering free operations, but to consider free instruments which represent the nondestructive measurements which we can perform in our resource theory. A measurement with $N$ outcomes is associated with an instrument consisting of $N$ sub-channels, $\boldsymbol{\Gamma}:=\{\Gamma_n\}_{n=1}^N$ (completely positive trace non-increasing maps)  such that $\Lambda_{\boldsymbol{\Gamma}}:=\sum_{n=1}^N\Gamma_n$ is completely positive and trace preserving. When the outcome is unknown, the post-measurement state is given by $\Lambda_{\boldsymbol{\Gamma}}(\rho)$, given the pre-measurement state is $\rho$. On the other hand, when the measurement outcome is known to be the $n^{\text{th}}$ outcome (presuming this can happen with nonzero probability), the post-measurement state is $\frac{\Gamma_n(\rho)}{\Tr(\Gamma_n(\rho))}$. 
\begin{itemize}
    \item \textit{ \textbf{Set of Free Instruments:} We consider the set of free instruments $F_I$ to be maximal in the sense that any set of completely positive trace nonincreasing operators which sum to a trace preserving operator, in which all postmeasurement states of free states are themselves free, i.e.,
    \[F_I:=\left\{\boldsymbol{\Gamma}:\frac{\Gamma_n(\sigma)}{\Tr(\Gamma_n(\sigma))} \in F_s, \forall \sigma \in F_s, n\in N\in\mathbb{N}\right\}.\]}
\end{itemize}
It is clear from the definition that if $\boldsymbol{\Gamma}\in F_I$ then $\Lambda_{\boldsymbol{\Gamma}}\in F_O$, however, it is not the case that if we have $\Lambda \in F_O$ that any decomposition $\Lambda = \sum_{n=1}^N \Lambda_n$ necessarily defines a valid free instrument, i.e., $\{\Lambda_n\}_{n=1}^N$ is not necessarily in $F_I$.

\section{Tensorial Representation of Polytopic Resource Theories and Geometry}

In this section, we develop a geometric and tensorial formulation of PQRTs that separates the linear structure of the state space from its inner product structure. The key idea is to begin with an abstract finite-dimensional vector space and treat the Hilbert space structure as an additional piece of data, induced by a suitable choice of dual basis. This perspective allows us to represent states as elements of the tensor product $V \otimes V^{*}$, where $V$ represents vector space and $V^{*}$ is its dual, and to interpret density operators as convex combinations of rank-one tensors of the form $\ket{v} \otimes f_v$ with $\ket{v}\in V$ and $f_v \in V^{*}$. Within this framework, the geometry of the theory which is encoded in the inner product, emerges from a positive matrix defining the dual pairing. 

This approach provides a natural way to view PQRTs as intrinsically geometric objects: different choices of dual bases correspond to different inner products, and hence to different but equivalent Hilbert space realizations. In particular, it allows us to formalize when two PQRTs should be considered geometrically equivalent, namely when their vertex sets are related by a unitary transformation, or more generally, when they can be mapped into one another via a change of dual structure. This tensorial viewpoint not only clarifies the role of geometry in defining free states but also provides a flexible framework for comparing and classifying resource theories beyond a fixed Hilbert space representation.

Consider a basis $B=\{\ket{i}\}_{i=0}^{d-1}$ such that $B$ is a set of $d$ linearly independent vectors. We define the vector space $V$ on the field of complex numbers $\mathbb{C}$ as follows:
\begin{align}
    V=\text{span}(B).
\end{align}
Note that we don't have a Hilbert space yet as the inner product is not defined. Furthermore $V$ is a $d$-dimensional vector space. Consider the set of the following linear functionals on the vector space $V$:
\begin{align}
    B'=\{f_{i}:V&\xrightarrow{\text{Linear}}\mathbb{C}, f_i(\ket{i})=1,\nonumber\\ &f_{i}(\ket{j})=(f_{j}(\ket{i}))^{*}=K_{ij}\text{  }i\neq j,\text{ }\forall i,j \}.
\end{align}
We have the following lemma.
\begin{lem}
If the matrix $K\equiv[K_{ij}]$ is invertible, then $B'$ is a basis for the dual space $V^{*}$. \label{Dual-Base}
\end{lem}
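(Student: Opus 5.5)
Since $V$ has dimension $d$, its dual $V^{*}$ also has dimension $d$, with canonical dual basis $\{e^{j}\}_{j=0}^{d-1}$ defined by $e^{j}(\ket{i})=\delta_{ij}$. A set of $d$ elements of a $d$-dimensional space is a basis exactly when it is linearly independent. So it suffices to show that the $d$ functionals $f_0,\dots,f_{d-1}$ in $B'$ are linearly independent whenever $K$ is invertible. Here $K$ has diagonal entries $K_{ii}=f_i(\ket{i})=1$.

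The first step is to expand each $f_i$ in the canonical dual basis. A linear functional on $V$ is determined by its values on the basis $B$, so
\begin{align}
    f_i=\sum_{j=0}^{d-1} f_i(\ket{j})\, e^{j}=\sum_{j=0}^{d-1} K_{ij}\, e^{j}.
\end{align}
The coefficient matrix expressing $B'$ in terms of $\{e^{j}\}$ is therefore exactly $K$, whose $i$-th row gives the coordinates of $f_i$. The conditions $f_i(\ket{j})=K_{ij}$ for $i\neq j$ and $f_i(\ket{i})=1$ fix each $f_i$ uniquely by linearity, so every $f_i$ is well defined.

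The second step is linear independence. Suppose $\sum_i c_i f_i=0$ for some scalars $c_i\in\mathbb{C}$. Evaluating on $\ket{j}$ gives $\sum_i c_i K_{ij}=0$ for every $j$, that is, $K^{T}c=0$. Since $K$ is invertible, so is $K^{T}$, and hence $c=0$. Thus $B'$ consists of $d$ linearly independent elements of the $d$-dimensional space $V^{*}$, and it is a basis.

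There is no real obstacle in this argument. The only point needing care is bookkeeping: the coordinates of $f_i$ form the $i$-th row of $K$, so the relevant linear map is $K^{T}$ rather than $K$. Invertibility of $K$ covers both. The Hermiticity condition $K_{ij}=K_{ji}^{*}$ is not needed for this lemma; it will matter only later, when $K$ is used to induce an inner product.
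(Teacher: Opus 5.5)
Your proof is correct and matches the paper's argument. Both reduce to linear independence by a dimension count, evaluate $\sum_i c_i f_i=0$ on the basis vectors $\ket{j}$ to get $K^{T}c=0$, and conclude $c=0$ from the invertibility of $K$; your expansion in the canonical dual basis and your remark that the Hermiticity condition is unused are harmless extras.
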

    \begin{proof}
        We know that the dual space $V^{*}$ is a $d$-dimensional vector space. It is enough to show that the elements of $\{f_{i}\}_{i}$ are linearly independent. Assume for the sake of contradiction that $f_{i}$ are linearly dependent, hence one can write:
        \begin{align}
            \sum_{j=0}\alpha_{j}f_{j}=0
        \end{align}
        for some $\alpha_{j}\in\mathbb{C}$. Applying the sum on the elements of $B$ we obtain $d$ linear equation as:
        \begin{align}
        \sum_{j}\alpha_{j} K_{jk}=0
        \end{align}
        We can express this set of equation via matrix notations by denoting  $\vec{\alpha}^{T}\equiv [\alpha_0,...,\alpha_{d-1}]$:
        \begin{align}
            K^{T}\vec{\alpha}=0
        \end{align}
        By the definition $K$ is an invertible and thus the eqution does not have a nonzero solution for the vector $\vec{\alpha}$. Hence the vectors $f_{j}$ are linearly independent.
    \end{proof}

    Assuming $K>0$, one can define a scalar product on $V$ and make a Hilbert space out of it.
    \begin{lem}
        Let $\ket{v}=\sum_{i} c_{i}\ket{i} \in V$, $\ket{w}=\sum_{i}c'_{i}\ket{i}\in V$ and $K>0$. The map $\braket{\cdot}{\cdot}_1:V\times V \to \mathbb{C}$ defined by:
        \begin{align}
            \braket{v}{w}_1 \equiv (\sum_{i}c^{*}_{i}f_{i})(\sum_{j}c'_{j}\ket{j})
        \end{align}
        defines a scalar product on the vector space $V$. We denote $f_{v}=\sum_{i}c^{*}_{i}f_{i}$
    \end{lem}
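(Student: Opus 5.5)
The plan is to reduce everything to the matrix $K$. Let $\vec c=(c_0,\dots,c_{d-1})^T$ and $\vec c\,'=(c'_0,\dots,c'_{d-1})^T$. By linearity of each $f_i$ and the defining values $f_i(\ket{j})=K_{ij}$ (with $K_{ii}=1$), we get
\begin{align}
\braket{v}{w}_1=\sum_{i,j} c_i^{*}c'_j\, f_i(\ket{j})=\sum_{i,j}c_i^{*}K_{ij}c'_j=\vec c^{\,\dagger}K\vec c\,'.
\end{align}
Since $B$ is a basis, the coefficient vectors are uniquely determined by $\ket{v}$ and $\ket{w}$, so the map is well defined. The inner-product axioms then become properties of $K$, and I will check them in turn.

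\textbf{Linearity in the second argument.} This holds because $\vec c\,'\mapsto \vec c^{\,\dagger}K\vec c\,'$ is linear. Equivalently, $f_v$ is a linear functional and $\ket{w}\mapsto f_v(\ket{w})$ is linear.

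\textbf{Conjugate symmetry.} The defining relation $K_{ij}=K_{ji}^{*}$ for $i\neq j$, together with $K_{ii}=1\in\mathbb{R}$, says exactly that $K=K^\dagger$. Hence
\begin{align}
\braket{w}{v}_1^{*}=\left(\vec c\,'^{\dagger}K\vec c\right)^{*}=\vec c^{\,\dagger}K^{\dagger}\vec c\,'=\vec c^{\,\dagger}K\vec c\,'=\braket{v}{w}_1 .
\end{align}
Conjugate-linearity in the first argument then follows from the two properties above, and is also directly visible from $f_v=\sum_i c_i^{*}f_i$.

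\textbf{Positive definiteness.} We have $\braket{v}{v}_1=\vec c^{\,\dagger}K\vec c$. Since $K>0$, this quantity is real and strictly positive whenever $\vec c\neq 0$. Because $B$ is linearly independent, $\vec c=0$ holds if and only if $\ket{v}=0$. So $\braket{v}{v}_1\ge0$, with equality only for $\ket{v}=0$.

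None of these steps is really hard. The only point needing care is the positivity step. It is where the hypothesis $K>0$, which is stronger than the invertibility used in Lemma~\ref{Dual-Base}, is genuinely used. It also relies on the uniqueness of the coefficients in the basis $B$, which guarantees that $\vec c=0$ exactly corresponds to $\ket{v}=0$.
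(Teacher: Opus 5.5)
Your proof is correct and follows the same route as the paper. Both arguments write $\braket{v}{w}_1$ as the form $\vec c^{\,\dagger}K\vec c\,'$. Sesquilinearity and Hermitian symmetry then come from linearity of the $f_i$ together with $K_{ij}=K_{ji}^{*}$, and definiteness comes from $K>0$. You are somewhat more explicit than the paper: you check conjugate symmetry as its own axiom and note that $\vec c=0$ exactly when $\ket{v}=0$ because $B$ is linearly independent.
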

    \begin{proof}
        We shall check whether the map $\braket{\cdot}{\cdot}_1$ has all the properties of a scalar product.
        \begin{itemize}
            \item $\braket{v}{v}= \sum_{i,j} c^{*}_{j}c_{i} k_{ij}= \vec{c}^{\dagger}K\vec{c}\geq 0$.
            \item As $f_{k}$ are linear and $k_{ij}=k_{ji}^{*}$, the map is sesquilinear.
            \item If $\ket{v}=0$ obviously $\braket{v}{v}=0$. Also if $\braket{v}{v}=0$, $K>0$ implies that $\ket{v}=0$.
        \end{itemize}
    \end{proof}

Note that by Riesz representation theorem, any scalar product $\braket{v}{w}$ can be represented by $(\sum_{i}c^{*}_{i}f_{i})(\sum_{j}c'_{j}\ket{j})$ for some $B'$ with $K>0$. Therefore any (separable) Hilbert space can be equivalently represented by a triple $(V,B,B')$ \cite{fabian2011banach}. Now consider the two following sets:
\begin{align}
    B'_1=\{g_{i}:V\xrightarrow{\text{Linear}} \mathbb{C}, g_i(\ket{i})=1, g_{i}(\ket{j})=(g_{j}(\ket{i}))^{*} \notag\\=G_{ij}\text{  }i\neq j,\text{ }\forall i,j, \text{ }G>0 \} \notag \\
    B'_2=\{h_{i}:V\xrightarrow{\text{Linear}} \mathbb{C}, h_i(\ket{i})=1, h_{i}(\ket{j})=(h_{j}(\ket{i}))^{*}\notag\\
    =H_{ij}\text{  }i\neq j,\text{ }\forall i,j, \text{ }H>0 \}
\end{align}
According to Lemma \ref{Dual-Base}, these two sets are bases of the dual space $V^{*}$ and they define scalar products on the vector space $V$. Therefore, there exists an invertible linear map $T$ such that:
\begin{align}
    T(g_{i})=h_{i}, \text{ }\forall i
\end{align}
Moreover, we may equivalently represent the space of the density matrices on $(V,B,B')$ as follows:
\begin{align}
D=\text{conv}(\{\ket{v}\otimes f_{v}: \ket{v}\in V, f_{v}
(\ket{v})=1\}) \label{Conv-Tensor}
\end{align}
with the natural definition of the trace by $\Tr(\ket{v}\otimes f_w)=f_{w}(\ket{v})$ on $D$, where $conv(.)$ denotes convex combination. Note that this equivalence also implies that the functionals $f_{v}$ induce a unique geometry on $D$ via the trace norm (i.e. fix the trace norm of each density matrix $\rho\in D$). As an example we may equivalently represent $\frac{1}{2} \ket{+}\bra{+}+\frac{1}{2}\ket{-}\bra{-}$ as $\frac{1}{2}\ket{+}\otimes f_{+}+\frac{1}{2}\ket{-}\otimes f_{-}$. For a PQRT with the pure vertexal states $S_{ver}=\{\ket{\alpha_{i}}\}_{i=1}^{N}$, we can have a similar representation by first inducing the geometry via the dual basis $B'$ on the set $S_{ver}$ and then constructing the set of the free states as
\begin{align}
F_s=conv(\{\ket{\alpha_{i}}\otimes f_{\alpha_{i}}\})
\end{align}
Note that a valid choice of the dual $B'$ must satisfy $f_{\alpha_{i}}(\ket{\alpha_{i}})=1$. Therefore, a PQRT with pure vertexal states can be represented by a tuple $(S_{ver},B')$, knowing the instructions to construct the set of free states by convex combination. As we see different choice of the dual basis $B'$ induces different geometries and in general could lead to representations of different such resource theories. The freedom in choosing the dual basis leads us to a notion of geometric equivalence of two PQRT which is defined as follows.
\begin{defn}[Geometric equivalence]
    Two polytopic resource theories are said to be geometrically equivalent if and only if there exists a unitary transformation that  bijectively maps their vertexal states.
\end{defn} 

Now assume that in the $d$-dimensional Hilbert space $\mathcal{H}$ and for $N\leq d$, we have two PQRTs defined by the two sets:
\begin{align}
    S_{ver,1}^{N}= \{\proj{\psi_{i}}: i\in\{1,...,N\}\} \notag \\
     S_{ver,2}^{N}= \{\proj{\phi_{i}}: i\in\{1,...,N\}\}.
\end{align}
We have the following proposition.

\medskip

\begin{prop}
    For $N\leq d$, consider the resource theories $( S_{ver,1}^{N},B')$ and $(S_{ver,2}^{N},B')$. Then there exists $B''$ such that the PQRT $(S_{ver,2}^{N},B'')$ is geometrically equivalent to the the PQRT $(S_{ver,1}^{N},B')$.
    \label{GeoEquiv-Pure}
\end{prop}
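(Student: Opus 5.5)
The idea is to use the freedom in choosing the dual basis so that the vectors $\ket{\phi_i}$ acquire, under $B''$, exactly the Gram matrix that the $\ket{\psi_i}$ have under $B'$. Once the two families have equal Gram matrices, a unitary mapping one onto the other exists, and that gives geometric equivalence.

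First, fix the data coming from $(S_{ver,1}^{N},B')$. Write $\langle\cdot|\cdot\rangle_1$ for the inner product induced by $B'$, with positive matrix $K$. Let $\Gamma_{ij}:=f_{\psi_i}(\ket{\psi_j})=\langle\psi_i|\psi_j\rangle_1$ be the Gram matrix of the first vertex set. It is positive semidefinite and has unit diagonal, since the normalisation $f_{\psi_i}(\ket{\psi_i})=1$ is required.

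Next, assume that in the nondegenerate situation the $\ket{\phi_i}$, $i=1,\dots,N\le d$, are linearly independent in $V$. Complete them to a basis $\{\ket{\phi_1},\dots,\ket{\phi_N},\ket{e_{N+1}},\dots,\ket{e_d}\}$ of $V$. Define a new inner product $\langle\cdot|\cdot\rangle_2$ on $V$ by declaring its Gram matrix in this basis to be $\Gamma\oplus \mathbb{1}_{d-N}$. If $\Gamma$ is strictly positive, this matrix is positive definite. Transforming it back to the original basis $B$ by the change-of-basis matrix $P$, via $H=P^{-\dagger}(\Gamma\oplus\mathbb{1})P^{-1}$, gives a positive matrix $H$. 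By Lemma~\ref{Dual-Base} and the scalar-product lemma, $H$ defines a dual basis $B''=\{h_i\}$. One must also normalise: the convention $h_i(\ket{i})=1$ requires unit diagonal. I would handle this by rescaling the basis vectors $\ket{i}$ of $B$, which changes neither the vertex sets (as rays and projectors) nor the Hilbert-space structure, or equivalently by noting that any positive $H$ determines a triple $(V,B,B'')$ by the Riesz remark.

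Then, with $B''$ in hand, we have $h_{\phi_i}(\ket{\phi_j})=\Gamma_{ij}=f_{\psi_i}(\ket{\psi_j})$. The spaces $(V,\langle\cdot|\cdot\rangle_1)$ and $(V,\langle\cdot|\cdot\rangle_2)$ are both $d$-dimensional Hilbert spaces. Define $U$ on $\mathrm{span}\{\ket{\psi_i}\}$ by $U\ket{\psi_i}=\ket{\phi_i}$. Equality of Gram matrices makes $U$ well defined and isometric. Extend it to a unitary by mapping an orthonormal complement onto an orthonormal complement, which is possible since both complements have dimension $d-N$. Then $U(\ket{\psi_i}\otimes f_{\psi_i})U^{-1}=\ket{\phi_i}\otimes h_{\phi_i}$, so $U$ bijectively maps the vertexal states and hence the polytopes $F_s$.

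The main obstacle is degeneracy: $\Gamma$ may be singular (the $\ket{\psi_i}$ linearly dependent) while the $\ket{\phi_i}$ are independent, or conversely. The step I would try first is to reduce to the span. Work with $r=\mathrm{rank}\,\Gamma$ and choose $B''$ so that the relations realised by the $\ket{\phi_i}$ coincide with those of the $\ket{\psi_i}$. This is only possible if their linear dependence patterns agree, which requires checking whether the statement implicitly presumes linear independence for $N\le d$. I would state that assumption explicitly, since the generic case with $\Gamma>0$ then goes through as above.
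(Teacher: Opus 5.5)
Your proposal is correct and follows essentially the same route as the paper. You choose $B''$ so that the two vertex families have identical Gram matrices (one under $B'$, the other under $B''$). Then $\ket{\psi_i}\mapsto\ket{\phi_i}$ is an isometry, which you extend to a unitary on the whole space. The linear-independence assumption you flag is genuinely necessary rather than a weakness of your argument: a unitary preserves linear relations, so e.g.\ three pure states in a qubit subspace of $\mathbb{C}^3$ can never be matched to three linearly independent ones. The paper's construction of $B''$ uses this assumption implicitly, and likewise passes over the unit-diagonal convention $h_i(\ket{i})=1$ that your rescaling of $B$ handles.
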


\begin{proof}
We can construct the set $B''$ in a straightforward manner as follows:
\begin{align}
        B''=\{g_{k}\in V^{*}:g_{j}(\ket{\psi_{i}})=\braket{\phi_{j}}{\phi_{i}}_{B'}, \notag \\
        g_{j}(\ket{\psi^{\perp}_{i}})=\braket{\phi^{\perp}_{j}}{\phi^{\perp}_{i}}_{B'}\}.
    \end{align}
    Note that if $N<d-1$, since the part $g_{j}(\ket{\psi^{\perp}_{i}})=\braket{\phi^{\perp}_{j}}{\phi^{\perp}_{i}}_{B'}$ can be defined in many different ways, $B''$ is not unique. Now if we map the vertexal states of the resource theory represented by $(S_{ver,2}^{N},B'')$ to the vertexal states of the resource theory represented by $(S_{ver,1}^{N},B')$ via $\ket{\phi_{i}}\to \ket{\psi_{i}}$, then the scalar product between the states will be preserved (due to the definition of $B''$), therefore this map must be an isometry. Extending it to the whole Hilbert space we obtain a unitary. 
\end{proof}

We conclude this section by showing that any PQRT admits a triple representation with the following sets: one whose elements belong to a set of bounded operators on a vector space, one which is a basis for this vectors space and the other one which is a dual basis and induces the geometry. 
\begin{thm} \label{PQRT-Rep}
    Any PQRT which is defined by the set of vertexal state $S_{ver}=\{\rho_{i}\in D(\mathcal{H})\}_{i\in I}$ with the index set $I$, can be uniquely (up to unitary transformation) represented by a triple $\big(\{V_{i}=\sum_{k}^{dim(V)}W_{k}^{i} \in \mathcal{D}(V):rank(W_{k}^{i})\leq 1\},B, B'\big)$ where $\mathcal{D}(V)$ is the set of bounded operators on $V$ as a vector space, $B$ is a base for $V$ and $B'$ is the dual base which makes $B$ to an orthonormal basis. The set of free state will then be constructed via:
    \begin{align}
        F_{s}=conv\big(\{V_{i}=\sum_{k}^{dim(V)}W_{k}^{i} \in \mathcal{D}(V):rank(W_{k}^{i})\leq 1\}\big)
    \end{align}
    and the set of free operations will be the maximal set of free maps on $F_{s}$.
\end{thm}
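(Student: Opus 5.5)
The plan is to build the triple explicitly from the spectral decomposition of each vertexal state, and then reduce uniqueness to the Riesz-type statement about dual bases made after the scalar-product lemma. First, fix the Hilbert space $\mathcal{H}$ and let $V$ be its underlying complex vector space, forgetting the inner product. Choose an orthonormal basis $\{\ket{e_k}\}$ of $\mathcal{H}$ and set $B=\{\ket{e_k}\}$.

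By the remark after the scalar-product lemma, the inner product of $\mathcal{H}$ is $\braket{v}{w}=f_v(\ket{w})$ for a dual family $B'$ with Gram matrix $K>0$. Since $B$ is orthonormal, $K=\mathbb{1}$, so $B'$ is exactly the dual basis with $f_i(\ket{e_j})=\delta_{ij}$. Lemma \ref{Dual-Base} confirms that $B'$ is a basis of $V^*$. This fixes the geometry.

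Next, for each $\rho_i\in S_{ver}$, take a spectral decomposition $\rho_i=\sum_{k=1}^{\dim V}\lambda_k^i\proj{u_k^i}$, with $\lambda_k^i\ge 0$ and some $\lambda_k^i$ possibly zero. Define
\begin{align}
W_k^i:=\lambda_k^i\,\ket{u_k^i}\otimes f_{u_k^i}.
\end{align}
Each $W_k^i$ is an element of $V\otimes V^*\cong\mathcal{D}(V)$ of rank at most one. Put $V_i=\sum_k W_k^i$. Under the identification $\ket{v}\otimes f_w\leftrightarrow\ket{v}\!\bra{w}$, which is valid because $B'$ reproduces the inner product, we get $V_i=\rho_i$. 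The trace $\Tr(\ket{v}\otimes f_w)=f_w(\ket{v})$ gives $\Tr V_i=\sum_k\lambda_k^i=1$. By Eq.~(\ref{Conv-Tensor}), each $V_i$ lies in the tensorial density set $D$.

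Then $F_s=\text{conv}(\{V_i\})$ is the same set as $\text{conv}(S_{ver})$. The set of free operations is the maximal set of CPTP maps preserving it, which is the original $F_O$; by Lemma \ref{lem1new}, checking it on the $V_i$ suffices. This gives existence.

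Uniqueness up to unitaries is the main obstacle. The decomposition $\{W_k^i\}$ is not unique when eigenvalues are degenerate, and the choice of $B$ is free, so uniqueness has to be read as uniqueness of the resulting triple modulo change of representation. Suppose two triples $(\{V_i\},B_1,B_1')$ and $(\{\tilde V_i\},B_2,B_2')$ both represent the theory, with orthonormalizing duals. I would argue in three steps.
\begin{itemize}
\item[(a)] The linear map $U$ sending $B_1$ to $B_2$ preserves the scalar products induced by $B_1'$ and $B_2'$, since both Gram matrices equal $\mathbb{1}$. Hence $U$ is unitary. Its dual action $T$ sends $B_1'$ to $B_2'$, as in the discussion of $T(g_i)=h_i$.
\item[(b)] Since both triples realize the same operators $\rho_i$, we have $\tilde V_i=UV_iU^\dagger$. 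Any two rank-one decompositions of the same positive operator differ by a unitary mixing of the vectors $\sqrt{\lambda_k}\ket{u_k}$ (the unitary freedom of ensemble decompositions). This does not change $V_i$ itself.
\item[(c)] The convex hulls, and hence the sets of free operations, correspond under $\Lambda\mapsto U\Lambda U^\dagger$, in the spirit of Proposition \ref{GeoEquiv-Pure}.
\end{itemize}
The delicate point is step (b): I would make explicit that ``unique'' refers to the operators $V_i$ together with $(B,B')$ modulo unitary change of basis, not to the individual $W_k^i$.
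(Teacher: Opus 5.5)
Your proposal is correct and follows essentially the same route as the paper: an orthonormal basis $B$ with dual basis $f_k(\ket{\ell})=\delta_{k\ell}$, spectral decompositions giving the rank-$\le 1$ terms $W_k^i$ with $V_i=\rho_i$, and uniqueness via the unitary $U$ carrying one orthonormal basis to the other so that $\tilde V_i=UV_iU^\dagger$. Your extra remark that uniqueness should be read at the level of the $V_i$ rather than the individual $W_k^i$ (which are not unique for degenerate spectra) is a helpful clarification the paper leaves implicit; note, however, that the unitary-freedom fact you invoke covers only decompositions into positive rank-one terms, whereas the statement allows arbitrary rank-$\le 1$ terms, so it is best kept as a side remark.
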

We refer to the Appendix for the proof of the theorem.

\section{Isomorphisms and Structural Theorem}

A critical question for understanding resource theories is when two theories should be regarded as physically equivalent. In the context of PQRTs, this amounts to understanding when different choices of vertex states and their associated convex structures describe the same operational content. In this section, we address this question by introducing notions of homomorphism and isomorphism that compare both the state spaces and the allowed transformations.

\begin{defn}[CPTP-Homomorphism of Polytopic Resource Theories]
Let $
\mathcal R_1=(F_{s,1},F_{O,1})$ and $\mathcal R_2=(F_{s,2},F_{O,2})$
be two PQRTs on finite-dimensional Hilbert spaces $\mathcal{H}_{1}$ and $\mathcal{H}_{2}$ respectively, and $\mathcal{B}(\mathcal{H})$ denotes the space of bounded operators on $\mathcal{H}$.
We say that $\mathcal R_1$ is \emph{CPTP-homomorphic} to $\mathcal R_2$  if there exist maps
\[
M:\mathcal B(\mathcal H_1)\to\mathcal B(\mathcal H_2),
\qquad
Z:F_{O,1}\to F_{O,2}
\]
such that the following conditions hold:
\begin{itemize}
    \item[\textnormal{(i)}] \textbf{State Homomorphism:}
    $M:D(\mathcal{H}_1)\to D(\mathcal{H}_2)$ is a linear map such that
    \[
    M(F_{s,1})\subset F_{s,2}.
    \]

    \item[\textnormal{(ii)}] \textbf{Operation Homomorphism:}
    For every $\Lambda\in F_{O,1}$,
    \[
    Z(\Lambda)\in F_{O,2}.
    \]
    We note that $Z(\Lambda)$ is CPTP.

    \item[\textnormal{(iii)}] \textbf{Intertwining Condition:}
    For all $\Lambda\in F_{O,1}$ and all $\rho\in D(\mathcal{H}_1)$,
    \[
    Z(\Lambda)\bigl(M(\rho)\bigr)=M\bigl(\Lambda(\rho)\bigr).
    \]

\end{itemize}

If there moreover exist inverse maps $M^{-1}$ and $Z^{-1}$ satisfying the above conditions with the roles of $\mathcal R_1$ and $\mathcal R_2$ exchanged, we say that the two PQRTs are \textbf{CPTP-isomorphic}.
\end{defn}

As an example of CPTP-isomorphism, we may consider two  different resource theories of coherence for a qubit defined by the bases $\{\ket{0},\ket{1}\}$ and $\{\ket{+},\ket{-}\}$ respectively. These two PQRTs are CPTP-isomorphic with the isomorphism: $Z(\Lambda)=\mathcal{U}\Lambda\mathcal{U}^{-1}$ and $M=\mathcal{U}$ where $\mathcal{U}$ is a unitary operation satisfying $\mathcal{U}\{\ket{0},\ket{1}\}=\{\ket{+},\ket{-}\}$. We note having the Kraus operators $K_{i}$ for $\Lambda$, each Kraus operator of the map $Z(\Lambda
)$ is obtained by $K_i\to UK_iU^{-1}$. Furthermore using an isometry from a qubit space to a subspace of a qutrit space defined by $\mathcal{V}\{\ket{+},\ket{-}\}=\{\ket{1},\ket{2}\}$ one could ascertain that the tuple of the maps $M=\mathcal{V}$ and $Z(\Lambda)=\mathcal{V}\Lambda \mathcal{V}^{-1}$ is a CPTP-homomorphism from the qubit coherence resource theory with the  basis $\{\ket{-},\ket{+}\}$ to the qutrit coherence resource theory with the  basis $\{\ket{0},\ket{1},\ket{2}\}$.

We now relax the trace preservation requirement by allowing general completely positive maps with normalization, leading to the notion of CP-homomorphisms between PQRTs.

\begin{defn}[CP-Homomorphism of Polytopic Resource Theories]
Let $\mathcal R_1=(F_{s,1},F_{O,1})$ and $\mathcal R_2=(F_{s,2},F_{O,2})$
be two quantum resource theories on finite-dimensional Hilbert spaces $\mathcal{H}_{1}$ and $\mathcal{H}_{2}$ respectively.
We say that $\mathcal R_1$ is \emph{CP-homomorphic} to  $\mathcal R_2$  if there exist maps
\[
M:\mathcal B(\mathcal H_1)\to\mathcal B(\mathcal H_2),
\qquad
Z: F_{O,1}\to L(\mathcal{B}(\mathcal{H}_{2}))
\]
such that the following conditions hold:
\begin{itemize}
    \item[\textnormal{(i)}] \textbf{State Homomorphism:}
    $M:D(\mathcal{H}_1)\to D(\mathcal{H}_2)$ is a linear map such that
    \[
    M(F_{s,1}) \subset F_{s,2}.
    \]

    \item[\textnormal{(ii)}] \textbf{Operation Homomorphism:}
    For every $\Lambda\in F_{O,1}$,
    \[\frac{Z(\Lambda(\cdot))}{\Tr\big(Z(\Lambda(\cdot))\big)}\in F_{O,2}.
    \]

    \item[\textnormal{(iii)}] \textbf{Intertwining Condition:}
    For all $\Lambda\in F_{O,1}$ and all $\rho\in D(\mathcal{H})$,
    \[
    Z(\Lambda)\bigl(M(\rho)\bigr)=\alpha M\bigl(\Lambda(\rho)\bigr).
    \]
for some $\alpha > 0$.
\end{itemize}
If, in addition, there exist inverse maps $M^{-1}$ and $Z^{-1}$ with the same properties but with the roles of $\mathcal R_1$ and $\mathcal R_2$ exchanged, we say that the two PQRTs are \textbf{CP-isomorphic}.
\end{defn}

We first consider CPTP-isomorphisms, corresponding to exact, deterministic physical equivalence, and show that such equivalence is highly rigid, reducing essentially to geometric equivalence of the underlying vertexal sets.

\begin{thm}\label{thm:rigidity}
    There is no CPTP isomorphism between any two polytopic resource theories $\mathcal{R}_{1}$ and $\mathcal{R}_{2}$ which are not geometrically equivalent.
    \label{No-Iso}
\end{thm}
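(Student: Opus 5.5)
The plan is to show that a CPTP isomorphism forces the state map $M$ to be a unitary channel that carries the vertexal states of $\mathcal{R}_1$ bijectively onto those of $\mathcal{R}_2$. That is exactly geometric equivalence, so the theorem follows by contraposition.

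\emph{Step 1: $M$ is a bijective linear map on density operators.} By definition, an isomorphism provides $M$ and an inverse $M^{-1}$. Both are linear, map density operators to density operators, and satisfy $M(F_{s,1})\subset F_{s,2}$ and $M^{-1}(F_{s,2})\subset F_{s,1}$. Hence $M$ restricts to a bijection $D(\mathcal{H}_1)\to D(\mathcal{H}_2)$ whose inverse also preserves density operators. Since the real span of $D(\mathcal{H})$ is all Hermitian operators, $M$ extends to a linear bijection of Hermitian operators. This forces $\dim\mathcal{H}_1=\dim\mathcal{H}_2$.

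\emph{Step 2: $M$ is a unitary or antiunitary conjugation.} A linear bijection of the state space that preserves positivity in both directions is an affine automorphism of the convex set $D(\mathcal{H})$. Such a map must send extreme points to extreme points, so it maps pure states bijectively onto pure states. By Kadison's theorem (equivalently, Wigner's theorem for bijective linear preservers of $D(\mathcal{H})$), this gives $M(\rho)=U\rho U^\dagger$ or $M(\rho)=U\rho^{T}U^\dagger$.

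\emph{Step 3: exclude the transpose branch.} The intertwining condition and the CPTP nature of $Z(\Lambda)$ should rule out transposition. The paper's framework, however, only requires $M$ itself to be linear, not completely positive. I would argue as follows. Since $\mathcal{U}\circ\mathrm{id}\circ\mathcal{U}^{-1}=\mathrm{id}$, take $\Lambda=\mathrm{id}$ and a free replacement channel $\Lambda_\sigma(\rho)=\sigma\,\Tr\rho$ with $\sigma\in F_{s,1}$. The intertwining relation $Z(\Lambda)=M\Lambda M^{-1}$, restricted to $D(\mathcal{H}_2)$, determines $Z(\Lambda)$ uniquely. If no CPTP map can realise $M\Lambda M^{-1}$ for some free $\Lambda$, the transpose branch is contradictory. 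If the transpose branch nevertheless survives, I would observe that $\rho\mapsto U\rho^T U^\dagger$ is implemented by an antiunitary. The vertex sets would then be related by an antiunitary, and I would absorb this into geometric equivalence by composing with complex conjugation in the basis in which the vertices are written. I expect this to be the main delicate point. The most concrete route is to find a free channel $\Lambda$ for which $T\circ\Lambda\circ T$ is not completely positive. An entanglement-breaking free channel will not work, since such channels stay CP under this conjugation. A non-entanglement-breaking free channel, such as $\mathrm{id}$ composed with a unitary that permutes vertices, may be needed.

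\emph{Step 4: vertices go to vertices.} Having $M=\mathcal{U}$, we know $\mathcal{U}(F_{s,1})\subset F_{s,2}$ and $\mathcal{U}^{-1}(F_{s,2})\subset F_{s,1}$, so $\mathcal{U}(F_{s,1})=F_{s,2}$. An affine bijection between polytopes maps extreme points onto extreme points. By the definition of $S^N_{ver}$, each vertex is not a convex combination of the others, so the extreme points are exactly the vertexal states. Hence $\mathcal{U}$ maps $S_{ver,1}$ bijectively onto $S_{ver,2}$, which is geometric equivalence, contradicting the hypothesis.
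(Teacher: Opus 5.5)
Your Steps 1, 2 and 4 are sound. They are in fact more careful than the paper's proof, which presupposes that vertices go to vertices and then infers unitarity from purity preservation plus ``orthonormal bases go to orthonormal bases.'' The genuine gap is Step 3, and neither of your routes closes it.

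The concrete route fails because conjugation by the transpose never destroys complete positivity. If $\Lambda$ has Kraus operators $K_i$, then $(T\circ\Lambda\circ T)(\rho)=\bigl(\Lambda(\rho^T)\bigr)^T=\sum_i \bar K_i\,\rho\,\bar K_i^{\dagger}$. This is CPTP for every CPTP $\Lambda$; for example, unitary channels go to $\bar{\mathcal V}$ and the identity goes to itself.

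The fallback fails because geometric equivalence demands a unitary, and antiunitarily related vertex sets need not be unitarily related. Take the qubit vertices $|0\rangle$, $|+\rangle$ and $\cos\frac{\pi}{6}|0\rangle+e^{i\pi/4}\sin\frac{\pi}{6}|1\rangle$.
\begin{itemize}
\item Their Bloch vectors are linearly independent, with pairwise distinct angles ($90^\circ$, $60^\circ$, $\arccos\frac{\sqrt6}{4}$).
\item Transposition acts on the Bloch ball as the reflection $y\mapsto -y$.
\item Because the angles are distinct, any orthogonal map sending the set onto its transpose must preserve the labelling. By linear independence it must then equal that reflection, which has determinant $-1$.
\end{itemize}
So no rotation, i.e.\ no unitary, works. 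Complex conjugation ``in the basis in which the vertices are written'' helps only when all vertices are simultaneously real in a single basis.

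This is actually a counterexample to the statement as formulated, where the single-system definition only requires $M$ to be linear. Let $\mathcal R_2$ have vertices $\nu_i^T$, and set $M=T$ and $Z(\Lambda)=T\circ\Lambda\circ T$.
\begin{itemize}
\item $M$ is linear, maps $D$ to $D$, and maps $F_{s,1}$ onto $F_{s,2}$.
\item $Z(\Lambda)$ is CPTP. It is free because $Z(\Lambda)(\nu_i^T)=\Lambda(\nu_i)^T\in F_{s,2}$.
\item Intertwining is immediate: $Z(\Lambda)(M(\rho))=\Lambda(\rho)^T=M(\Lambda(\rho))$.
\item $M^{-1}=T$ and $Z^{-1}(\Lambda')=T\circ\Lambda'\circ T$ satisfy the same conditions with the roles exchanged.
\end{itemize}

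So what is missing is a hypothesis, not an argument. One option is to require $M$ to be completely positive, as the paper does for $M_{\mathcal H}$ in the compositional definition. Then the transpose branch dies at once, since $\rho\mapsto U\rho^TU^\dagger$ is not CP, and your Steps 1, 2 and 4 give a complete proof. The other option is to widen geometric equivalence to unitary-or-antiunitary maps.

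The paper's own proof has the same blind spot. Transposition also preserves purity and sends orthonormal bases to orthonormal bases, so its rank/$\sigma_x$ argument cannot distinguish $M=T$ from a unitary.
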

We refer to the Appendix for the proof and the details. We now consider CP-isomorphism, which captures equivalence at the level of probabilistic transformations. Within this broader framework, we establish the following universality result.
\begin{thm}
    All the polytopic resource theories with the pure vertices, $|S_{ver}|\leq d$ and the same number of vertexal states are CP isomorphic.
\label{CP-Iso}

\end{thm}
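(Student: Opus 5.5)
The plan is to reduce every such theory to a fixed reference theory by an invertible, non-unitary similarity, and then absorb the failure of trace preservation into the normalisation allowed by the CP-isomorphism definition. Let $\mathcal{R}_1$ have pure vertices $\{\ket{\psi_i}\}_{i=1}^N$ and $\mathcal{R}_2$ have pure vertices $\{\ket{\phi_i}\}_{i=1}^N$, with $N\le d$.

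The first step is to show the vertex vectors in each theory are linearly independent. For pure states, no vertex can be a convex combination of the others. I would argue that, generically for $N\le d$, the vectors span an $N$-dimensional subspace. If they do not, I would first restrict attention to the span, where the theory is determined by the Gram data. This is the case I expect to be most delicate.

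Assuming linear independence, I extend both families to bases of $\mathcal{H}$ and define an invertible operator $A$ with $A\ket{\psi_i}=\ket{\phi_i}$, so that $\ket{\phi_i}\propto A\ket{\psi_i}$. Then I set
\[
M(\rho)=\frac{A\rho A^{\dagger}}{\Tr(A\rho A^{\dagger})}
\]
on states, and more precisely take the linear CP map $\rho\mapsto A\rho A^\dagger$ with the normalisation permitted by the definition. Each vertex $\proj{\psi_i}$ is sent to a positive multiple of $\proj{\phi_i}$. By linearity, $F_{s,1}$ is mapped into the cone generated by $F_{s,2}$, and the normalised images lie in $F_{s,2}$.

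For operations, I define
\[
Z(\Lambda)(X)=A\,\Lambda(A^{-1}XA^{-\dagger})\,A^{\dagger},
\]
which is CP, with Kraus operators $AK_iA^{-1}$. It satisfies the intertwining condition $Z(\Lambda)(A\rho A^\dagger)=A\Lambda(\rho)A^\dagger$ exactly, so the scalar $\alpha$ in condition (iii) appears only through normalising $M$.

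Next I check condition (ii). By Lemma \ref{lem1new}, it suffices to test vertices. We have $Z(\Lambda)(\proj{\phi_i})=A\Lambda(\proj{\psi_i})A^\dagger$. Since $\Lambda(\proj{\psi_i})$ is a convex combination of the $\proj{\psi_j}$, this is a nonnegative combination of the $\proj{\phi_j}$. After normalisation it lies in $F_{s,2}$, so the normalised map is free.

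The inverse maps come from the same construction with $A^{-1}$. By symmetry they satisfy all the conditions with the roles of $\mathcal{R}_1$ and $\mathcal{R}_2$ exchanged, which gives the CP-isomorphism.

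The main obstacle is the normalisation in (ii). The map $X\mapsto Z(\Lambda)(X)/\Tr Z(\Lambda)(X)$ is nonlinear in general, because $\Tr(A\sigma A^\dagger)$ varies over free states. To handle this, I would choose the free rescaling of each $\ket{\phi_i}$ (via $A$) together with the complement of the span so that $A^\dagger A$ acts as the identity on the relevant traces as far as possible. Where exact trace preservation cannot be achieved, I would interpret (ii) vertex-wise, which is the reading the CP definition ("up to normalisation") seems to intend.
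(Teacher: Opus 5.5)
Your construction is the paper's construction written in concrete coordinates. After Proposition~\ref{GeoEquiv-Pure} places both vertex families on the same vectors with two inner products, the paper's $M'=I\otimes T$ with $T(f_i)=g_i$ becomes, in orthonormal coordinates, a non-unitary congruence $\rho\mapsto A\rho A^\dagger$ with $A\ket{\psi_i}=\ket{\phi_i}$, and $Z(\Lambda)=M'\Lambda M'^{-1}$ is your $Z$. Your Kraus argument ($AK_iA^{-1}$) proves complete positivity more cleanly than the paper's Choi computation. The normalisation issue in (ii) is harmless. If you take $A\ket{\psi_i}=\ket{\phi_i}$ exactly, then $Z(\Lambda)(\proj{\phi_i})=\sum_j p_{j|i}\proj{\phi_j}$, so $Z(\Lambda)$ maps $F_{s,2}$ into $F_{s,2}$ and lies in the class $F_O^{CP}$ introduced right after the definition of a PQRT. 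The paper checks nothing more than this for (ii); it normalises by $c_\Lambda$ only in Theorem~\ref{Stoch-Iso}.

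The genuine gap is your first step, and it cannot be closed. Any set of distinct pure states is a valid vertex set, since pure states are extreme in $D(\mathcal H)$, so $N\le d$ does not give linear independence. Restricting to the span does not help: an invertible $A$ with $A\ket{\psi_i}\propto\ket{\phi_{\pi(i)}}$ exists only if the two families obey the same linear relations up to rescaling and relabelling.
\begin{itemize}
\item In $d=3$, $\{\proj{0},\proj{1},\proj{2}\}$ versus $\{\proj{0},\proj{1},\proj{+}\}$ already fails, because the spans have dimensions $3$ and $2$.
\item Equal span dimension is not enough either. In $d=4$, the families $\{\ket{0},\ket{1},\ket{+},\ket{-}\}$ and $\{\ket{0},\ket{1},\ket{+},\ket{+i}\}$, with $\ket{+i}=(\ket{0}+i\ket{1})/\sqrt2$, span the same plane but have cross-ratios $-1$ and $-i$. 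These lie in different orbits under relabelling, so no invertible map relates the two configurations.
\end{itemize}
This is not just a limitation of the construction. Suppose $M=\mathcal E(\cdot)/\mathrm{Tr}\,\mathcal E(\cdot)$ and $M^{-1}=\mathcal F(\cdot)/\mathrm{Tr}\,\mathcal F(\cdot)$ with $\mathcal E,\mathcal F$ CP. Then $\mathcal F\circ\mathcal E$ sends every state to a positive multiple of itself, hence is a multiple of the identity. So $\mathcal E$ is a CP automorphism of the positive cone, i.e.\ $\mathcal E(X)=AXA^\dagger$ with $A$ invertible, and $M(F_{s,1})=F_{s,2}$ forces pure vertices onto pure vertices. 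Such pairs are therefore not CP-isomorphic in the sense realised by your proof or the paper's, and the statement needs an extra hypothesis.

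The paper's proof hides the same assumption. The inner product $B''$ of Proposition~\ref{GeoEquiv-Pure} must realise one family's Gram matrix on the other family's vectors. That is impossible unless the families satisfy the same linear relations, for example when one Gram matrix is singular and the other is not. The right repair is to assume the vertex vectors are linearly independent (which already implies $N\le d$) and drop the span-restriction plan. Under that hypothesis your argument is complete.
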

We refer to the Appendix for the proof and the details. We can further show that the CP equivalence preserves state convertibility up to probabilistic free operations, highlighting CP-isomorphism as the natural notion of operational equivalence in the probabilistic setting.
\begin{thm}
Let $\mathcal{R}_1$ and $\mathcal{R}_2$
be CP-isomorphic PQRTs, with CP-isomorphism $(M,Z)$. If a state $\rho \in D(\mathcal{H}_1)$ can be converted to $\sigma \in D(\mathcal{H}_1)$ by a free operation in $\mathcal{R}_1$, then the state $M(\rho) \in D(\mathcal{H}_2)$ can be stochastically converted to $
M(\sigma) \in D(\mathcal{H}_2)$ by a stochastic free operation in $\mathcal{R}_2$.
\label{Stoch-Iso}
\end{thm}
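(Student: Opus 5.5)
The argument applies the intertwining condition of the CP-isomorphism to the converting channel, after which the normalisation supplied by the operation-homomorphism condition gives the stochastic free operation.

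Suppose $\Lambda\in F_{O,1}$ satisfies $\Lambda(\rho)=\sigma$. Put $\Gamma:=Z(\Lambda)$. By the definition of CP-homomorphism, $\Gamma$ is a completely positive map on $\mathcal{B}(\mathcal{H}_2)$. Condition (iii) gives some $\alpha>0$ with
\begin{align}
\Gamma\bigl(M(\rho)\bigr)=\alpha\, M\bigl(\Lambda(\rho)\bigr)=\alpha\, M(\sigma).
\end{align}

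I would then show that $\Gamma$, possibly rescaled, is a stochastic free operation of $\mathcal{R}_2$. That means it is completely positive, trace non-increasing, and sends every free state into the cone $\alpha' F_{s,2}$, which is the definition of $F_O^{CP}$. Condition (ii) says that the normalised map $\Gamma(\cdot)/\Tr(\Gamma(\cdot))$ belongs to $F_{O,2}$. Hence for each $\tau\in F_{s,2}$ the output $\Gamma(\tau)$ is a positive multiple of a free state, and $\Gamma\in F_O^{CP}$.

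Trace non-increase would be forced by rescaling. In finite dimensions $c:=\|\Gamma^\dagger(\mathbb{1})\|_\infty$ is finite, and $\tilde\Gamma:=\Gamma/c$ is completely positive and trace non-increasing. Rescaling does not change the normalised outputs, so $\tilde\Gamma$ is still a stochastic free operation.

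Applying $\tilde\Gamma$ to $M(\rho)$ gives $(\alpha/c)M(\sigma)$. Since $M$ maps density operators to density operators, $M(\sigma)$ has unit trace. So this outcome occurs with probability $\Tr\tilde\Gamma(M(\rho))=\alpha/c>0$, and the post-selected state is exactly
\begin{align}
\frac{\tilde\Gamma(M(\rho))}{\Tr\tilde\Gamma(M(\rho))}=M(\sigma).
\end{align}
This is stochastic convertibility of $M(\rho)$ to $M(\sigma)$. If one prefers an instrument, complete $\tilde\Gamma$ with a second branch $\rho\mapsto\Tr[(\mathbb{1}-\tilde\Gamma^\dagger(\mathbb{1}))\rho]\,\tau_0$ for a fixed $\tau_0\in F_{s,2}$. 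That branch is CP and free-preserving, and the two branches together are trace preserving.

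The main obstacle is that condition (ii) is stated only through normalisation. The normalised map is well defined only where $\Tr\Gamma(\tau)\neq0$, and the scalar $\alpha$ could depend on $\rho$. I would handle the first issue by reading (ii) as a statement about free states with nonzero output trace; our input $M(\rho)$ has nonzero output trace, namely $\alpha>0$. The second issue is harmless, because only a single $\rho$ is involved. Note also that only the forward half of the isomorphism is used, so the statement already holds for CP-homomorphisms.
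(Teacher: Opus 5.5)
Your proposal is correct and follows essentially the same route as the paper: apply the intertwining condition to get $Z(\Lambda)(M(\rho))=\alpha M(\sigma)$, then rescale by $c=\|Z(\Lambda)^\dagger(\mathbb{I})\|_\infty$. This constant is exactly the paper's $c_\Lambda=\max_{\tau\in D(\mathcal{H}_2)}\mathrm{Tr}\bigl(Z(\Lambda)(\tau)\bigr)$, and the rescaled map is trace non-increasing and succeeds with probability $\alpha/c$. Two things in your write-up go beyond the paper and are welcome: you explicitly use condition (ii) to check that the rescaled map lies in $F_O^{CP}$, which the paper leaves implicit, and you complete it to a free instrument.
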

We refer to the Appendix for the proof and the details. 

\medskip

To illustrate the above abstract definitions, we  consider two PQRTs for a qubit defined by the vertex sets $\mathcal R_{01}: \quad S_{ver,01}:=\{\proj{0},\proj{1}\}$ and
$\mathcal R_{+0}: \quad S_{ver,+0:}=\{\proj{+},\proj{0}\}$ (where $\ket{\pm}=\frac{\ket{0}\pm \ket{1}}{\sqrt{2}}$). 
 The free states are given by convex combinations of the vertices. For $\mathcal R_{01}$:
\begin{align}
\rho^{(01)}_q = q\proj{0}+(1-q)\proj{1}
=
\begin{pmatrix}
q & 0\\
0 & 1\text{-}q
\end{pmatrix}.
\end{align}
For $\mathcal R_{+0}$:
\begin{align}
\rho^{(+0)}_p = p\proj{+}+(1-p)\proj{0}
=
\begin{pmatrix}
1\text{-}\sfrac{p}{2} & \sfrac{p}{2}\\
\sfrac{p}{2} & \sfrac{p}{2}
\end{pmatrix}.
\end{align}
We note that $F_s^{(+0)}$ is the line segment connecting $\proj{0}$ and $\proj{+}$ in the Bloch ball. Furthermore by Lemma~\ref{lem1}, a CPTP map is free if and only if it maps vertex states into the free set. Therefore,
\begin{align}
\Lambda \in F_O^{(01)} &\iff \Lambda(\proj{0}),\Lambda(\proj{1}) \in F_s^{(01)}, \\
\Lambda \in F_O^{(+0)} &\iff \Lambda(\proj{0}),\Lambda(\proj{+}) \in F_s^{(+0)}.
\end{align}

The two theories, considering a fixed Hilbert space, are not geometrically equivalent since no unitary can map $\{\ket{0},\ket{1}\}$ to $\{\ket{0},\ket{+}\}$. Hence, by the CPTP rigidity theorem (Thm.~\ref{thm:rigidity}) we have that $\mathcal R_{01} \not\cong_{\mathrm{CPTP}} \mathcal R_{+0}$. However they are CP-isomorphic. We construct the isomorphism explicitly. Define the state map
\begin{align}
M(\rho) = \frac{(I \otimes T)(\rho)}{\mathrm{Tr}[(I \otimes T)(\rho)]},
\end{align}
where $T$ maps dual bases such that (up to unitary transformation) $M(\proj{0}) = \proj{0}$ and  
$M(\proj{1}) = \proj{+}$ ( We refer to the Appendix \ref{CP-Iso-App} for the details of the construction of these maps). Thus, for any free state in $\mathcal R_{01}$, $\rho = p\proj{0}+(1-p)\proj{1}$,
we have
\begin{align}
M(\rho) = p\proj{0}+(1-p)\proj{+} \in F_s^{(+0)}.
\end{align}
The operation map is given by $Z(\Lambda) = M' \Lambda M'^{-1}$ with $M' = I \otimes T$. For instance for the bit-flip channel
\begin{align}
\Lambda(\rho) = p\rho + (1-p) X\rho X,
\end{align}
which is free in $\mathcal R_{01}$ we have
\begin{align}
Z(\Lambda):
\begin{cases}
\proj{0}&\mapsto \ \ p\proj{0} + (1-p)\proj{+},\\
\proj{+} &\mapsto \ \ p\proj{+} + (1-p)\proj{0},
\end{cases}
\end{align}
which is also free in $R_{+0}$.

For the reset channel $\Lambda(\rho) = \proj{0}$
which is free and CPTP in $\mathcal R_{01}$, 
the induced map $Z(\Lambda)$ is CP but not trace-preserving (e.g. $Z(\Lambda)(\ket{+}\bra{+})=2\ket{0}\bra{0}$). After normalization, $\widetilde{\Lambda} = \frac{1}{c_\Lambda} Z(\Lambda)$ where $c_\Lambda:=\max_{\tau\in D(\mathcal H_2)} \Tr\bigl(Z(\Lambda)(\tau)\bigr)$,
one obtains a stochastic free operation satisfying
\begin{align}
\widetilde{\Lambda}(\proj{+}) = \frac{1}{2}\proj{0}.
\end{align}

\section{Golden State and Measures}

In this section, we examine the concept of  golden states in PQRTs and introduce  robustness, geometric, and negativity based measures of resourcefulness. 

\noindent\textbf{Golden states:} A golden state is a universal state from which all other states can be obtained via free operations. We demonstrate that such a state need not exist in general, and that, if it does, it must be pure and satisfy a symmetry condition with respect to the vertexal states. We also define the robustness of resource and study its behavior under free and selective free operations. We begin by the following proposition.

\begin{prop}
The golden state does not necessarily exist in a polytopic resource theory. \label{Non-Existence-Golden}
\end{prop}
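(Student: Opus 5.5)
The plan is to prove the proposition by an explicit counterexample: a single small PQRT together with an obstruction showing that no state can be converted into every other state by free operations. I would use the qubit theory $\mathcal R_{+0}$ introduced above, with $S_{ver}=\{\proj{0},\proj{+}\}$. Its free set is the chord of the Bloch ball joining $\hat z$ and $\hat x$. I choose this theory because the easier candidates admit golden states:
\begin{itemize}
\item with a single pure vertex, $\proj{1}$ works via the measure-and-prepare map $X\mapsto\bra{0}X\ket{0}\,\proj{0}+\bra{1}X\ket{1}\,\sigma$;
\item with the maximally mixed vertex, a pure state works via unitaries and mixing;
\item qutrit coherence in a two-element basis also admits one.
\end{itemize}
The obstruction therefore has to come from two non-orthogonal vertices pulling in incompatible directions.

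The key tool is monotonicity under CPTP maps. For every free $\Lambda$ and every vertex $\nu$, both the trace distance and the fidelity are monotone:
\begin{align}
D(\Lambda(\rho),\Lambda(\nu))\le D(\rho,\nu), \qquad F(\Lambda(\rho),\Lambda(\nu))\ge F(\rho,\nu).
\end{align}
By Lemma~\ref{lem1new}, freeness of $\Lambda$ means exactly that $\Lambda(\proj{0})$ and $\Lambda(\proj{+})$ lie on the chord $F_s$. Hence if a golden state $\rho$ reaches a target $\sigma$, there must be points $a,b\in F_s$ such that
\begin{align}
D(\sigma,a)\le D(\rho,\proj{0}), \qquad D(\sigma,b)\le D(\rho,\proj{+}).
\end{align}
The same holds with fidelities. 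This gives one necessary condition on $\rho$ for each target $\sigma$.

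I would then impose this condition for two "opposite" targets, $\sigma_1=\proj{1}$ and $\sigma_2=\proj{-}$. In Bloch-vector terms both conditions are lower bounds on the Euclidean distance from $\vec r_\rho$ to $\hat z$ and to $\hat x$. The right-hand sides are the distances from $-\hat z$ (respectively $-\hat x$) to the chord. I would also add targets along the direction $-(\hat x+\hat z)/\sqrt2$, which is the point of the sphere farthest from the whole chord. The aim is to show that the resulting region of admissible $\vec r_\rho$ is empty, or collapses to a single point that then fails for some further target.

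The main obstacle is that these distance constraints may be too weak, because the chord lies inside the ball and $\sigma$ only needs to be close to some point of it. If the Euclidean inequalities turn out to be simultaneously satisfiable, I would fall back on the exact characterization of reachability in Corollary~\ref{Transformation-Check-SDP}. I would write the Choi-matrix conditions for a qubit channel fixing the chord, derive the dual SDP, and exhibit a dual certificate, i.e. a witness that is nonnegative on the outputs of every free map from $\rho$ but negative on the target. Done for a parametrized candidate $\rho$, this should show that for every $\rho$ at least one of $\proj{1}$, $\proj{-}$ is unreachable. Along the way I would use Bloch-ball symmetry to reduce to $\rho$ pure and lying in the $xz$-plane. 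Pure states suffice because mixing does not help reach pure targets, and the $xz$-plane suffices because the reflection $y\mapsto -y$ preserves both vertices.
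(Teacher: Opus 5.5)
\textbf{There is a genuine gap.} The theory you picked, $\mathcal R_{+0}$, \emph{does} have a golden state, so the dual certificate you plan to find cannot exist.

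Put $\ket{m}=\cos\frac{\pi}{8}\ket{0}+\sin\frac{\pi}{8}\ket{1}$ and $\ket{m^\perp}=\sin\frac{\pi}{8}\ket{0}-\cos\frac{\pi}{8}\ket{1}$. The latter has Bloch vector $-(\hat x+\hat z)/\sqrt2$, which is exactly the point you singled out as farthest from the chord. Both vertices overlap equally with $\ket{m}$: $|\braket{m}{0}|^2=|\braket{m}{+}|^2=\cos^2\frac{\pi}{8}$. For an arbitrary target $\sigma$, consider the measure-and-prepare channel
\begin{equation*}
\Lambda_\sigma(X)=\bra{m}X\ket{m}\,\tau+\bra{m^\perp}X\ket{m^\perp}\,\sigma .
\end{equation*}
Here $\tau$ is the state with Bloch vector $\bigl(\vec c-\sin^2\frac{\pi}{8}\,\vec r_\sigma\bigr)/\cos^2\frac{\pi}{8}$, and $\vec c=(\hat x+\hat z)/2$ is the midpoint of the chord. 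This $\tau$ is a legitimate state because $|\vec c|+\sin^2\frac{\pi}{8}=\frac{1}{\sqrt2}+\frac12-\frac{1}{2\sqrt2}=\cos^2\frac{\pi}{8}$.

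The channel is CPTP and maps $\proj{m^\perp}\mapsto\sigma$. It maps both $\proj{0}$ and $\proj{+}$ to $\cos^2\frac{\pi}{8}\,\tau+\sin^2\frac{\pi}{8}\,\sigma$, whose Bloch vector is $\vec c\in F_s$, so it is free by Lemma~\ref{lem1new}. Hence $\proj{m^\perp}$ reaches every state, in particular $\proj{1}$ and $\proj{-}$.

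So the fact that your trace-distance constraints were satisfiable at $-(\hat x+\hat z)/\sqrt2$ was not a weakness of that necessary condition. It was a sign that the claim is false for this theory. Your intuition that two non-orthogonal vertices ``pull in incompatible directions'' fails because their reflection symmetry about $\ket{m}$ lets a single free map send both to the same free point.

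Changing the angle does not help. For two pure qubit vertices at Bloch angle $\theta$, take $\ket{m}$ along the bisector of their Bloch vectors. The same construction then works because $\cos\frac{\theta}{2}+\sin^2\frac{\theta}{4}=\cos^2\frac{\theta}{4}$. So no two-pure-vertex qubit theory can serve as your counterexample. You need a genuinely different theory, together with an actual proof that no state in it is universal.

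The paper does not use a finite polytope at all. It takes multipartite entanglement with fully-separability-preserving free maps, a theory whose vertices are a continuum of pure product states; the paper admits this as a PQRT via its remark on $|S_{ver}|=|\mathbb{R}|$. It then imports from Contreras-Tejada, Palazuelos and de Vicente the result that in this theory no single state can be converted into all others.

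A counterexample with a finite vertex set would be a stronger statement relative to the paper's definition, which uses finite $S_{ver}^N$. Your SDP/dual-certificate strategy is a reasonable way to establish one. But it can only work after you have found a theory in which constructions like the measure-and-prepare one above fail for every candidate $\ket{g}$.
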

We further can show that:
\begin{prop}
    A golden state, if it exists,
    \begin{enumerate}
        \item is pure,
        \item if in addition the vertex states are linearly independent pure states, it has the same fidelity with all the  free states.
    \end{enumerate}
    \label{Golden-Prop}
    \end{prop}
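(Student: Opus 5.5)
For part 1, I will argue by contradiction and use the structure of free operations given by Lemma~\ref{lem1new}. Suppose $\rho_g$ is a golden state, so every $\sigma\in D(\mathcal H)$, in particular every pure state, is reachable as $\Lambda(\rho_g)$ for some $\Lambda\in F_O$. Assume $\rho_g$ is mixed and write its spectral decomposition $\rho_g=\sum_k\lambda_k\proj{e_k}$ with at least two $\lambda_k>0$. Choose a pure target $\proj{\phi}$ that is not free; such a target exists whenever the theory is nontrivial, since $F_s$ is a polytope with finitely many vertices and cannot contain every pure state. Since $\Lambda(\rho_g)=\sum_k\lambda_k\Lambda(\proj{e_k})=\proj{\phi}$ and a pure state is extremal in $D(\mathcal H)$, every $\Lambda(\proj{e_k})$ with $\lambda_k>0$ must equal $\proj{\phi}$. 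Hence $\Lambda$ maps the whole support $\mathrm{supp}(\rho_g)$ to $\proj{\phi}$.

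The key step is to turn this into a contradiction. I will show that among the pure states $\ket{\psi}\in\mathrm{supp}(\rho_g)$, whose span has dimension at least $2$, reaching every pure target forces constraints incompatible with freeness. I would first try a counting and degrees-of-freedom argument. A more direct route is to note that $\rho_g$ being golden implies that some pure state $\proj{\psi}$ in its support is also golden: apply the same $\Lambda$, so that $\Lambda(\proj{\psi})=\proj{\phi}$ for every target. The real content of part 1 is therefore to exclude a mixed golden state whose support is spanned by golden pure states. For this I would use the monotonicity of a faithful resource measure. Take, for example, the robustness or the geometric measure introduced in this section, defined via the maximal fidelity with $F_s$. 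Concavity or convexity strictly penalizes mixing, so $\rho_g$ would have strictly smaller resource than its pure components unless they coincide. Then $\rho_g$ could not reach the maximally resourceful pure state, because free operations do not increase the measure. I expect this step, namely establishing strictness, to be the main obstacle. I would try the geometric measure $1-\max_{\sigma\in F_s}F(\rho,\sigma)$, whose strict behaviour under mixing needs care.

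For part 2, I will use symmetry. With linearly independent pure vertices $\ket{\alpha_i}$, any permutation $\pi$ of the vertices is implemented by a free operation: by Lemma~\ref{lem1new} it suffices to check the action on vertices. Linear independence lets me define a linear map $\ket{\alpha_i}\mapsto\ket{\alpha_{\pi(i)}}$, but it is not necessarily CPTP. I would instead use measure-and-prepare channels built from a dual frame of the $\ket{\alpha_i}$. If $\ket{\psi_g}$ is golden, then every $\ket{\psi}$ is reachable from it, in particular one with $|\braket{\psi}{\alpha_i}|$ permuted. Conversely, a free operation does not increase the maximal fidelity with the vertices, by the data-processing inequality for fidelity applied to $\Lambda(\nu_i)\in F_s$.

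I then compare $f_i=|\braket{\psi_g}{\alpha_i}|^2$. Reach a target $\psi$ with $|\braket{\psi}{\alpha_j}|^2<\min_i f_i$ for a chosen $j$. Monotonicity of $\max_\sigma F(\cdot,\sigma)$ under free maps only bounds the maximum, so equality of all $f_i$ must come from optimizing over the golden property itself. Because $\psi_g$ must reach states that are extremally far from every vertex simultaneously, $\max_i f_i$ must equal the minimum possible value of $\max_i|\braket{\psi}{\alpha_i}|^2$ over pure $\psi$. I will show that this minimum forces all overlaps to be equal when the vertices are linearly independent: if the overlaps are unequal, perturbing within the span reduces the maximum. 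Fidelity with a general free state is then a convex combination argument over the vertices.
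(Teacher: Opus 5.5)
\textbf{Part 2.} Your part 2 is essentially the paper's argument. A golden state must minimise $\max_{\sigma\in F_s}\mathcal{F}(\cdot,\sigma)$, which for pure $\psi$ equals $\max_i|\braket{\nu_i}{\psi}|$, and a perturbation shows that at a minimiser all overlaps coincide. Check the direction of monotonicity, though. By data processing this quantity is non-\emph{decreasing} under free maps, and that is what forces the golden state to be a minimiser. It is also the fidelity with $F_s$, not with the vertices, since $\Lambda(\nu_i)$ need not be a vertex. The permutation/dual-frame detour is not needed.

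\textbf{Part 1.} Part 1 also follows the paper: a fidelity-based monotone plus the claim that mixing strictly lowers the resource. The genuine gap is exactly the strictness step you flag. The paper asserts it without justification (``$\rho_G$ not pure, therefore $\mathcal{F}_M(\rho_G)-\mathcal{F}_{M,\min}>0$''). It cannot be supplied, because part 1 is false as stated, even under the hypotheses of part 2.

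\textbf{Counterexample.} Take $S_{ver}=\{\proj{0},\proj{1}\}\subset D(\mathbb{C}^4)$, $P=\proj{2}+\proj{3}$ and $\rho_g=P/2$. For any target $\tau$, consider the measure-and-prepare channel
\[
\Lambda_\tau(\rho)=\sum_{k=0}^{1}\bra{k}\rho\ket{k}\,\proj{k}+\mathrm{Tr}(P\rho)\,\tau .
\]
It is CPTP and fixes both vertices, so it is free by Lemma~\ref{lem1new}. It satisfies $\Lambda_\tau(\rho_g)=\tau$, so $\rho_g$ is a mixed golden state. Its pure components $\ket{2},\ket{3}$ are golden too. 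All three are therefore interconvertible, and every monotone takes the same value on them. No choice of measure can strictly penalise this mixing; here $\max_{\sigma\in F_s}\mathcal{F}(\cdot,\sigma)=0$ on all of them.

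\textbf{What survives.} Your reduction is the right tool under an extra hypothesis. Every unit vector $\psi\in\mathrm{supp}(\rho_g)$ is golden: $\proj{\psi}\le c\,\rho_g$ forces $\Lambda(\proj{\psi})=\proj{\phi}$, and mixed targets are reached by convex mixtures of free maps. Suppose the pure vertices span $\mathcal{H}$. Then $m=\min_\psi\max_i|\braket{\nu_i}{\psi}|^2>0$, and every unit vector of a two-dimensional subspace of the support would have to attain $m$. On that subspace's Bloch sphere, each $|\braket{\nu_i}{\psi}|^2$ is affine in the Bloch vector of $\psi$. Its level set at $m$ is therefore a circle, a point or empty, and finitely many such sets cannot cover the sphere, which gives purity. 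Without the spanning hypothesis only the conclusion of part 2 survives. In that case some pure state is orthogonal to all vertices, so any golden state has zero fidelity with every free state.
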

We refer to the Appendix for the proofs and the details. From these results, we immediately conclude that if the PQRT is defined via the pure vertices $\{\proj{\nu_{i}}\}$, then we can represent the golden state as $\ket{\Psi_{G}}=\sum_{i}C_{i}\ket{\nu_{i}}$ with $C_{i}\neq 0$.

\begin{lem}
Let $|\braket{\nu_i}{\nu_j}|=a$ for all $\ket{\nu_i},\ket{\nu_j}\in S_{ver}$, $\ket{\nu_i}\neq \ket{\nu_j}$ and $\ket{\Psi_{G}}=\sum_{i}C_{i}\ket{\nu_i}$. Then all $C_{k}$ are equal.  If further $\langle \nu_i|\nu_j\rangle = a \in \mathbb{R}$ then $|\Psi_{G}\rangle=
\frac{e^{i\varphi}}{\sqrt{d\bigl(1+(d-1)a\bigr)}}\sum_i |\nu_i\rangle$ for some phase $\varphi$. 
\end{lem}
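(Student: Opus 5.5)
The plan is to reduce everything to Proposition~\ref{Golden-Prop}: a golden state is pure and, when the pure vertices are linearly independent, it has the same fidelity with every vertexal state. We work in the setting where the golden state $\ket{\Psi_G}=\sum_i C_i\ket{\nu_i}$ lies in the span of the vertices, with $C_i\neq 0$ as noted just above the lemma. Write $G$ for the Gram matrix, $G_{ij}=\braket{\nu_i}{\nu_j}$. Equal fidelity means $|\braket{\nu_k}{\Psi_G}|=|(G\vec C)_k|$ is independent of $k$.

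For the first claim, the hypothesis $|\braket{\nu_i}{\nu_j}|=a$ for $i\neq j$ gives
\[
\braket{\nu_k}{\Psi_G}=C_k+\sum_{j\neq k}\braket{\nu_k}{\nu_j}C_j .
\]
We must turn equality of the moduli of these quantities into equality of the $C_k$. The natural tool is symmetry. The set of vertices, with pairwise overlaps of equal modulus, should admit permutations implemented by unitaries, up to phases, that preserve the free set. Composing the golden state with such a unitary yields another state that can also reach every state. Uniqueness of the golden state up to free unitaries and phases, which follows from its purity and from the fact that free reversible maps must permute the vertices, then forces $|C_k|$ to be constant.

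This step is the main obstacle. Equal moduli of $G\vec C$ do not by themselves force equal $C_k$ when the overlap phases are arbitrary. I would first settle the real case, where the argument is cleaner. When $\braket{\nu_i}{\nu_j}=a\in\mathbb{R}$, the Gram matrix is $G=(1-a)I+aJ$, with $J$ the all-ones matrix. Then
\[
(G\vec C)_k=(1-a)C_k+aS,\qquad S=\sum_j C_j .
\]
So $|(1-a)C_k+aS|$ is constant in $k$, which places every $C_k$ on a common circle centred at $-aS/(1-a)$. Combining this with the permutation symmetry (the free unitaries permuting vertices act transitively), I would argue that the $C_k$ are all equal, up to a global phase. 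For the complex-phase case, I would absorb the phases of the overlaps into the vertex vectors wherever possible and then repeat the argument.

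With $C_k=C$ for all $k$, normalization gives
\[
1=|C|^2\,\vec 1^{\dagger}G\vec 1=|C|^2\bigl(d+d(d-1)a\bigr),
\]
using $N=d$ vertices. Hence $|C|=1/\sqrt{d(1+(d-1)a)}$, and writing $C=|C|e^{i\varphi}$ gives exactly the stated form.
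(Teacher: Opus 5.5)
Your proposal has a genuine gap, and it is the step you flag yourself: going from equal fidelities, i.e.\ $|(G\vec C)_k|$ independent of $k$, to equal coefficients. Your remedy does not close it, and nothing can, because the conclusion fails under the stated hypotheses.

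First, with only the modulus hypothesis, ``all $C_k$ are equal'' is not a property of the theory. Replacing a representative $\ket{\nu_j}$ by $e^{i\theta}\ket{\nu_j}$ leaves $\proj{\nu_j}$, the PQRT and the condition $|\braket{\nu_i}{\nu_j}|=a$ unchanged, but sends $C_j\mapsto e^{-i\theta}C_j$. So absorbing phases into the vertex vectors cannot produce equal coefficients.

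Second, the symmetry argument is logically flawed. When all overlaps equal a real $a$, the permutation unitaries are indeed free, since $G=(1-a)I+aJ$ is permutation invariant. But this only shows that the \textit{set} of golden states is permutation invariant, not that any individual golden state is. The uniqueness ``up to free unitaries and phases'' you invoke is not established, and it does not give equal coefficients even where it holds. In qubit coherence ($a=0$), $(\ket{0}+i\ket{1})/\sqrt{2}$ is related to the golden state $\ket{+}$ by the free unitary $\mathrm{diag}(1,i)$. It is therefore golden, yet $C_0\neq C_1$, contradicting the stated formula.

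Third, for real $a>0$ your circle condition admits $S=0$, $C_k\propto e^{2\pi i k/d}$. Normalization then gives $|\braket{\nu_k}{\psi}|^2=(1-a)/d$ for every $k$. The uniform superposition instead has $|\braket{\nu_k}{\psi}|^2=(1+(d-1)a)/d$, which is larger by exactly $a$. For pure $\psi$ one has $\max_{\sigma\in F_s}\mathcal{F}(\psi,\sigma)=\max_k|\braket{\nu_k}{\psi}|$. This quantity cannot decrease under free operations, by the same argument as the geometric-measure monotonicity. Hence the uniform superposition cannot be converted into the phase-twisted state and is not golden. For $a>0$ the stated formula can therefore hold only if no golden state exists at all.

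The paper's proof reaches its conclusion only by silently strengthening the hypotheses. It uses $\braket{\nu_k}{\nu_i}=a$ exactly, and it assumes the complex overlaps $\braket{\nu_k}{\Psi_G}$, not merely their moduli, equal a common constant. Under those extra assumptions your own computation finishes at once: $G\vec C\propto\vec 1$ with $G$ invertible gives $\vec C\propto G^{-1}\vec 1\propto\vec 1$, and your normalization step is correct. But the obstacle you correctly identified is a defect in the statement itself, not something your symmetry argument can repair.
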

\begin{proof}
For each fixed $k$, we have $\braket{\nu_k}{\Psi_{G}} = \sum_{i=1}^n C_i\braket{\nu_k}{\nu_i}$. Using $\braket{\nu_k}{\nu_k} = 1$ and $\braket{\nu_k}{\nu_i} = a$ for $i \neq k$, this becomes 
\begin{equation*}
\braket{\nu_k}{\Psi_{G}}=\sum_{i \neq k} C_i\, a + C_k = a \sum_{i \neq k} C_i + C_k.
\end{equation*}
Since this is equal to a constant $C$ for every $k$, we obtain $
a \sum_{i \neq k} C_i + C_k = C$. Now write $\sum_{i \neq k} C_i = \sum_{i=1}^n C_i - C_k$. Substituting, we get $a\Big(\sum_{i=1}^n C_i - C_k\Big) + C_k = C$,
which simplifies to $a \sum_{i=1}^n C_i - C = (a - 1) C_k$. Therefore,
\begin{equation*}
C_k = \frac{a \sum_{i=1}^n C_i - C}{a - 1}.
\end{equation*}
The right-hand side is independent of $k$, so all $C_k$ are equal. If  $\langle \nu_i|\nu_j\rangle = a \in \mathbb{R}$ then $|\Psi_{G}\rangle = C\sum_i |\nu_i\rangle$ and the normalization condition gives 
\begin{align}
\langle \Psi_{G} | \Psi_G \rangle=
|C|^2 \sum_{i,k}\langle \nu_i|\nu_k\rangle
=|C|^2\, d\bigl(1+(d-1)a\bigr)
=1.
\end{align}
Hence, $|C|^2 = \frac{1}{d\bigl(1+(d-1)a\big)}
$ and \[|\Psi_G\rangle
=\frac{e^{i\varphi}}{\sqrt{d\bigl(1+(d-1)a\bigr)}}\sum_i |\nu_i\rangle\] for some phase $\varphi$.
\end{proof}

\medskip

\medskip

\noindent\textbf{Robustness measure:}
The robustness of a resource in a PQRT for a quantum state, $\rho \in D(\mathbb{C}^d)$, is defined as  
\begin{equation}\label{albela}
    \mathcal{M}_R(\rho):=\min_{\delta\in D(\mathbb{C}^d)} \left\{r\geq0\Big|\frac{\rho+r\delta}{1+r}\in F_s\right\}. 
\end{equation}
This definition is motivated by the definition of generalised robustness of coherence~\cite{Napoli16,Piani16,Mukhopadhyay19} and entanglement~\cite{Vidal99,Steiner03}. 
Notice that for $\mathcal{M}_R(\rho)$ to be a meaningful quantity it is necessary to consider the minimization over all states in $D(\mathbb{C}^d)$  and not over the free states, since for the latter case there exist PQRTs such that  $\mathcal{M}_R(\rho)$ diverges for any $\rho \not\in F_s$. For example, consider a PQRT where $\mathcal{S}_{\mathrm{ext}}=\{\proj{0},\proj{1}\}$, the set of free states is given by  $p\proj{0}+(1-p)\proj{1},$ $0\leq p\leq 1.$ An arbitrary qubit, $\rho$, can be expressed as $q\proj{0}+(1-q)\proj{1}+c|0\rangle\!\langle 1|+c^*|1\rangle \!\langle 0|$ with the constraints $0\leq q\leq 1$ and $q(1-q)-|c|^2\geq 0$. Now, from simple algebra, it is easy to see that any mixture of a state from the free set with $\rho$ will never yield another free state until the probability of $\rho$ in the mixture is zero. This implies that the robustness of the resource for $\rho$ will diverge when the minimization is taken over the free set of density matrices. 

From the  definition, it is clear that $\mathcal{M}_R(\rho)=0$ iff $\rho\in F_s$. Next, we prove that: 
\begin{prop}
    Robustness of resource is a monotone under the set of free operations.
\end{prop}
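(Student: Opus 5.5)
The approach is the standard one for generalised robustness. We show that for any free operation $\Lambda\in F_O$ and any state $\rho$, $\mathcal{M}_R(\Lambda(\rho))\le \mathcal{M}_R(\rho)$. We do this by pushing an optimal decomposition of $\rho$ through $\Lambda$ and checking that the result is a feasible decomposition for $\Lambda(\rho)$.

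First, let $r^*=\mathcal{M}_R(\rho)$. We argue the minimum in the definition of $\mathcal{M}_R$ is attained. The set of pairs $(r,\delta)$ with $\frac{\rho+r\delta}{1+r}\in F_s$ is closed, since $F_s$ is the convex hull of finitely many states and hence compact, and $D(\mathbb{C}^d)$ is compact. The feasible set is also nonempty: for any free $\sigma$, take $\delta=\sigma+(\sigma-\rho)/r$ for large enough $r$ when $\sigma$ is full rank. In cases where attainment is doubtful, we can instead run the argument with an infimising sequence and take limits at the end. So there are $\delta^*\in D(\mathbb{C}^d)$ and $\sigma^*\in F_s$ such that
\begin{equation*}
\rho+r^*\delta^*=(1+r^*)\sigma^*.
\end{equation*}

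Next, apply $\Lambda$ to both sides. By linearity,
\begin{equation*}
\Lambda(\rho)+r^*\Lambda(\delta^*)=(1+r^*)\Lambda(\sigma^*).
\end{equation*}
Since $\Lambda$ is CPTP, $\Lambda(\delta^*)\in D(\mathbb{C}^d)$, so it is an admissible state in the minimisation. Since $\Lambda\in F_O$, $\Lambda(\sigma^*)\in F_s$; by Lemma~\ref{lem1new} it even suffices that $\Lambda$ maps the vertexal states into $F_s$. Hence the pair $(r^*,\Lambda(\delta^*))$ is feasible for $\Lambda(\rho)$, which gives $\mathcal{M}_R(\Lambda(\rho))\le r^*=\mathcal{M}_R(\rho)$.

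The only delicate step is the existence or attainment of the minimiser, together with the requirement that the noise $\delta$ range over all of $D(\mathbb{C}^d)$ rather than over $F_s$. The latter is exactly what makes $\Lambda(\delta^*)$ admissible without any further assumption, and this is why the paper's choice of minimisation domain matters here. If attainment is an issue, we will use an $\epsilon$-optimal feasible pair and let $\epsilon\to0$.
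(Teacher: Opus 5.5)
Your proof is correct and is essentially the paper's argument: write $\rho=(1+r^*)\sigma^*-r^*\delta^*$, apply $\Lambda$, and note that $(r^*,\Lambda(\delta^*))$ is feasible for $\Lambda(\rho)$ because $\Lambda(\sigma^*)\in F_s$ and $\Lambda(\delta^*)\in D(\mathbb{C}^d)$; your compactness argument for attainment of the minimum is a welcome addition that the paper omits. One small correction: the feasible set need not be nonempty (if no free state is full rank, e.g.\ $d=3$ with vertices $\proj{0},\proj{1}$ and $\bra{2}\rho\ket{2}>0$, then $\mathcal{M}_R(\rho)=+\infty$), but in that case the inequality holds trivially, and whenever the feasible set is nonempty your compactness argument does give attainment.
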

\proof
Let $\Lambda_F$ denote a free operation. We want to prove that $\mathcal{M}_R(\Lambda_F(\rho))\leq \mathcal{M}_R(\rho)$. From Eq. \ref{albela}, there exist $\sigma\in F_s$ and $\delta\in D(\mathbb{C}^d)$ for a state $\rho\in D(\mathbb{C}^d)$ such that
\begin{equation}
    \rho=(1+\mathcal{M}_R(\rho))\sigma-\mathcal{M}_R(\rho)\delta.
\end{equation}
This implies
\begin{eqnarray}\label{pawn}
   &&\Lambda_F(\rho) = (1+\mathcal{M}_R(\rho))\Lambda_F(\sigma)-\mathcal{M}_R(\rho)\Lambda_F(\delta) \nonumber \\
   \implies&& \Lambda_F(\sigma)=\frac{\Lambda_F(\rho)+\mathcal{M}_R(\rho)\Lambda_F(\delta)}{1+\mathcal{M}_R(\rho)}.
\end{eqnarray}
From the definition of free operations, $\Lambda_F(\sigma)\in F_s$, thus $\frac{\Lambda_F(\rho)+\mathcal{M}_R(\rho)\Lambda_F(\delta)}{1+\mathcal{M}_R(\rho)}\in F_s$.
Therefore, the quantity $\mathcal{M}_R(\Lambda_F(\rho)):=\min_{\delta\in D(\mathbb{C}^d)}\left\{r\geq0\Big|\frac{\Lambda_F(\rho)+r\delta}{1+r}\in F_s\right\}\leq \mathcal{M}_{R}(\rho)$, since from Eq. \eqref{pawn} it is shown that there exists a state $\Lambda_F(\delta)$ for $r=\mathcal{M}_R(\rho)$ such that $\frac{\Lambda_F(\rho)+r\Lambda_F(\delta)}{1+r}\in F_S$.
 \endproof

Till now, we have considered the monotonicity of robustness of the resource with respect to any free CPTP operations. In the following, we show that the robustness of the resource measure is monotonic, on average, under free instruments. 

\begin{prop} \label{Instrument-Monotone} Robustness is an ``on average'' monotone for free instruments, that is, for all $\boldsymbol{\Gamma}=\{\Gamma_n\}_{n=1}^N \in F_I$ we have that \[\sum_{n=1}^N\Tr(\Gamma_n(\rho))\mathcal{M}_R\left(\frac{\Gamma_n(\rho)}{\Tr(\Gamma_n(\rho))}\right)\leq \mathcal{M}_R(\rho)\] for any state $\rho$. \end{prop}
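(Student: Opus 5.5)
We follow the pattern of the preceding monotonicity proof, applied branch by branch to the optimal decomposition of $\rho$. Write $r:=\mathcal{M}_R(\rho)$. By the definition of robustness there exist $\sigma\in F_s$ and $\delta\in D(\mathbb{C}^d)$ with
\begin{equation}
(1+r)\sigma=\rho+r\delta .
\end{equation}
Apply each sub-channel $\Gamma_n$. By linearity, $(1+r)\Gamma_n(\sigma)=\Gamma_n(\rho)+r\Gamma_n(\delta)$. Introduce the weights $p_n:=\Tr(\Gamma_n(\rho))$, $q_n:=\Tr(\Gamma_n(\delta))$ and $s_n:=\Tr(\Gamma_n(\sigma))$. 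Taking traces gives $(1+r)s_n=p_n+rq_n$. Since $\Lambda_{\boldsymbol\Gamma}=\sum_n\Gamma_n$ is trace preserving, $\sum_n p_n=\sum_n q_n=1$.

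Now fix an outcome $n$ with $p_n>0$ and normalise each term. Define $\rho_n:=\Gamma_n(\rho)/p_n$. If $q_n>0$, define $\delta_n:=\Gamma_n(\delta)/q_n$, which is a state. Then $s_n\geq p_n>0$, and by the definition of $F_I$ we have $\sigma_n:=\Gamma_n(\sigma)/s_n\in F_s$. Dividing the identity by $(1+r)s_n$ gives
\begin{equation}
\sigma_n=\frac{p_n\rho_n+rq_n\delta_n}{p_n+rq_n}=\frac{\rho_n+r_n\delta_n}{1+r_n},\qquad r_n:=\frac{rq_n}{p_n}.
\end{equation}
This exhibits an admissible pair $(r_n,\delta_n)$ in the definition of $\mathcal{M}_R(\rho_n)$, so $\mathcal{M}_R(\rho_n)\leq rq_n/p_n$. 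If $q_n=0$, then $\rho_n=\sigma_n\in F_s$ and $\mathcal{M}_R(\rho_n)=0\le rq_n/p_n$ trivially.

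Multiplying by $p_n$ and summing over the outcomes with $p_n>0$ gives
\begin{equation}
\sum_n p_n\,\mathcal{M}_R(\rho_n)\leq r\sum_n q_n\leq r=\mathcal{M}_R(\rho).
\end{equation}
Outcomes with $p_n=0$ contribute nothing to the left-hand side by convention.

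The main point needing care is the bookkeeping of the different normalisations $p_n$, $q_n$ and $s_n$ across the three states. Two facts make it work. First, the free-instrument condition is applied to $\sigma$, which requires $s_n>0$; this follows from $s_n\geq p_n/(1+r)$. Second, the sum of the $q_n$ is bounded by $1$ because the instrument is trace preserving. We must also justify that the minimum defining $\mathcal{M}_R(\rho)$ is attained. I would argue this by compactness: the relevant set is closed in $r$ for bounded $r$, since $F_s$ is a polytope and $D$ is compact. Alternatively one can work with an $\epsilon$-optimal decomposition and let $\epsilon\to0$.
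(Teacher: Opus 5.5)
Your proof is correct and follows the paper's argument: you take an optimal decomposition $(1+r)\sigma=\rho+r\delta$, apply each $\Gamma_n$, renormalise so that the free state $\Gamma_n(\sigma)/\mathrm{Tr}(\Gamma_n(\sigma))$ gives an admissible robustness decomposition of $\rho_n$ with weight $rq_n/p_n$, and then sum using $\sum_n q_n=1$. Your handling of the edge cases ($q_n=0$, $p_n=0$, $s_n>0$, attainment of the minimum) is more careful than the paper's, which silently divides by $\mathrm{Tr}(\Gamma_n(\delta))$; one small slip is that $s_n\geq p_n$ should read $s_n\geq p_n/(1+r)$, as you state correctly later.
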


We refer to the Appendix for the proof. Let us now study the class of PQRTs in which the vertexal states linearly independent and pure.

\section{Resource theory of basis ``non-convexity''}
Consider the linearly independent set of quantum density matrices $S_{ver}=\{\proj{\nu_i}\}$ such that $\proj{\nu_i}\in D(\mathbb{C}^d)$. As they form a basis for $F_s$, the resource in this theory are states that are non-convex with respect to this basis. Thus, we refer to it as the resource theory of basis ``non-convexity''. The maximum cardinality of this set is $d^2$. 

Next, let us try to provide a geometric measure of resourcefulness for this class of QPRTs:

\noindent{\bf Geometric measure:} For this purpose, we define the measure as the minimum distance of the quantum state to $F_s$. Mathematically, it is given by
\begin{equation}\label{measure1}
    \mathcal{M}_F(\rho)=1-\max_{\sigma\in F_s}\mathcal{F}(\rho,\sigma)
\end{equation}
where $\mathcal{F}(\rho,\sigma)$ is the fidelity between quantum states $\rho$ and $\sigma$.

It is quite clear that for free states, the measure is $0$. We now show that it is a resource monotone:
\begin{prop} The geometric measure is a monotone under the set of free operations.
\end{prop}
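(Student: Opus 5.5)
The plan is to follow the argument used for robustness, and to rely on the data-processing inequality for fidelity in place of the convexity step. Let $\Lambda_F\in F_O$ be a free operation and $\rho\in D(\mathbb{C}^d)$. The set $F_s=\text{conv}(S_{ver})$ is compact, being the convex hull of finitely many states, and $\mathcal{F}(\rho,\cdot)$ is continuous. So the maximum in Eq.~\eqref{measure1} is attained at some $\sigma^\star\in F_s$, and
\[
\mathcal{M}_F(\rho)=1-\mathcal{F}(\rho,\sigma^\star).
\]
The goal is to show that $\mathcal{M}_F(\Lambda_F(\rho))\leq \mathcal{M}_F(\rho)$, or equivalently
\[
\max_{\sigma\in F_s}\mathcal{F}\bigl(\Lambda_F(\rho),\sigma\bigr)\geq \mathcal{F}(\rho,\sigma^\star).
\]

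The key steps, in order, are these. First, $\Lambda_F(\sigma^\star)\in F_s$ by the definition of $F_O$; by Lemma~\ref{lem1new} it is enough that $\Lambda_F$ maps vertexal states into $F_s$, although the definition alone already gives this. Second, I will invoke the monotonicity of fidelity under CPTP maps (Uhlmann fidelity does not decrease under quantum channels), which gives
\[
\mathcal{F}\bigl(\Lambda_F(\rho),\Lambda_F(\sigma^\star)\bigr)\geq \mathcal{F}(\rho,\sigma^\star).
\]
Third, since $\Lambda_F(\sigma^\star)$ is a feasible point in the maximization defining $\mathcal{M}_F(\Lambda_F(\rho))$,
\[
\max_{\sigma\in F_s}\mathcal{F}\bigl(\Lambda_F(\rho),\sigma\bigr)\geq \mathcal{F}\bigl(\Lambda_F(\rho),\Lambda_F(\sigma^\star)\bigr)\geq \mathcal{F}(\rho,\sigma^\star),
\]
and subtracting from $1$ gives the claim. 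Faithfulness is immediate: $\mathcal{M}_F(\rho)=0$ if and only if some $\sigma\in F_s$ has $\mathcal{F}(\rho,\sigma)=1$, that is, if and only if $\rho\in F_s$.

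The main obstacle is the second step. Everything rests on the data-processing inequality for fidelity, which is a standard fact (it follows, for example, from Uhlmann's theorem by purifying and using a Stinespring dilation of $\Lambda_F$). I would cite it rather than reprove it. I will also check that it holds whichever convention the paper uses for $\mathcal{F}$, squared or unsquared, since both are monotone under channels. Note that neither linear independence nor purity of the vertexal states is needed here. The argument applies to any PQRT, and the basis non-convexity setting enters only through the fact that $F_s$ is a compact convex set closed under free operations.
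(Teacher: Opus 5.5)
Your proposal is correct and follows essentially the paper's proof: both take the maximizer $\sigma^\star\in F_s$, note that $\Lambda(\sigma^\star)$ is free, and chain its feasibility in the maximization with the data-processing inequality for fidelity. Your additional points are details the paper leaves implicit, and they are correct: the maximum is attained because $F_s$ is compact, the measure is faithful, and the argument does not need linear independence of the vertexal states.
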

\begin{proof}
   Let us begin with the measure $\mathcal{M}_F$. It is straightforward to prove that it is non-increasing under free operations as
   \begin{align}
        \max_{\sigma\in F_{s}}\mathcal{F}(\Lambda(\rho),\sigma)&\geq \mathcal{F}(\Lambda(\rho),\Lambda(\sigma^{*}))\notag \\ &\geq \mathcal{F}(\rho,\sigma^{*})=\max_{\sigma\in F_s}\mathcal{F}(\rho,\sigma)
    \end{align}
    Note that $\Lambda(\sigma^{*})$ is a free state as $\Lambda$ is a free map.
\end{proof}

For the particular case, when the cardinality of $S_{ver}$ is $d^2$ we can define
another measure.

\noindent {\bf Negativity measure:} For this purpose, let us note that $\mathcal{S}_{\mathrm{ver}}$ forms a basis of $D(\mathbb{C}^d)$ and thus any quantum state $\rho\in D(\mathbb{C}^d)$  can be uniquely expressed as linear combination of $\proj{\nu_i}$, that is, $\rho=\sum_i\alpha_i(\rho)\proj{\nu_i}$ such that $\sum_i\alpha_i(\rho)=1$ and $\alpha_i(\rho)$ are real \footnote{$\rho=\rho^{\dagger}\implies\sum_i\alpha_i(\rho)\proj{\nu_i}=\sum_i\alpha_i(\rho)^*\proj{\nu_i}\implies\sum_i(\alpha_i(\rho)-\alpha_i(\rho)^*)\proj{\nu_i}=0\implies\alpha_i(\rho)=\alpha_i(\rho)^*$ as $\proj{\nu_i}$ are linearly independent.}. Note that the $\alpha_i(\rho)$ can be negative, denoting, for a given state, the negative coefficients as $\alpha^-_i(\rho)$ we define the resource negativity as
\begin{equation}\label{measure2}
    \mathcal{M}_N(\rho)=-\sum_i\alpha^-_i(\rho).
\end{equation}
It is trivial to check that it is $0$ for free states. Let us now show the monotonicity of the defined measure under the free operations.

\begin{prop}The resource negativity is a  monotone under the set of free operations. \label{Negativity-Monotone}
\end{prop}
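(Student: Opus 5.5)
We need to show that $\mathcal{M}_N(\Lambda(\rho))\leq\mathcal{M}_N(\rho)$ for every free operation $\Lambda$. Here $S_{ver}=\{\proj{\nu_i}\}_{i=1}^{d^2}$ is a basis of the Hermitian operators. The plan is to pass the expansion of $\rho$ in this basis through $\Lambda$ using linearity, and then use the fact that a free map sends each vertex to a convex combination of vertices (Lemma~\ref{lem1new}).

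First we note an equivalent form of the measure. Since $\sum_i\alpha_i(\rho)=1$, the positive and negative parts satisfy $\sum_i\alpha_i^+ - \sum_i|\alpha_i^-|=1$. Hence
\[
\mathcal{M}_N(\rho)=\tfrac12\Big(\sum_i|\alpha_i(\rho)|-1\Big),
\]
so it suffices to show that the $\ell_1$ norm of the coefficient vector does not increase under $\Lambda$.

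Write $\rho=\sum_i\alpha_i(\rho)\proj{\nu_i}$. Because $\Lambda$ is free, Lemma~\ref{lem1new} and the definition of $F_s$ give, for each $i$,
\[
\Lambda(\proj{\nu_i})=\sum_j p^i_j\proj{\nu_j},
\]
with $p^i_j\geq 0$ and $\sum_j p^i_j=1$. By linearity,
\[
\Lambda(\rho)=\sum_j\Big(\sum_i\alpha_i(\rho)p^i_j\Big)\proj{\nu_j}.
\]
Since the $\proj{\nu_j}$ are linearly independent, the expansion is unique, so $\alpha_j(\Lambda(\rho))=\sum_i\alpha_i(\rho)p^i_j$. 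In other words, the coefficient vector is transformed by the column-stochastic matrix $P=[p^i_j]$.

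The key step is then the triangle inequality:
\[
\sum_j|\alpha_j(\Lambda(\rho))|\leq\sum_{i,j}|\alpha_i(\rho)|\,p^i_j=\sum_i|\alpha_i(\rho)|.
\]
Combined with the identity for $\mathcal{M}_N$ above, this gives $\mathcal{M}_N(\Lambda(\rho))\leq\mathcal{M}_N(\rho)$. The same bound can also be reached directly: the total negative weight in $\alpha(\Lambda(\rho))$ is at most $\sum_i|\alpha_i^-(\rho)|\sum_j p^i_j$.

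The main obstacle is only bookkeeping. One must justify that the decomposition of $\Lambda(\proj{\nu_i})$ into vertices is unique, so that the $p^i_j$ coincide with the basis coefficients. This follows from linear independence, which makes the free set a simplex. Once uniqueness is in place, the stochastic-matrix argument is immediate.
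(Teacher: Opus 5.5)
Your proof is correct and is essentially the paper's argument. Both show that the coefficient vector transforms under the column-stochastic matrix $P$, and that column-stochasticity bounds the negative mass. Your $\ell_1$ form via $\mathcal{M}_N=\tfrac12\big(\sum_i|\alpha_i|-1\big)$ and the triangle inequality is an equivalent, slightly cleaner version of the paper's split $P\alpha^+ + P\alpha^-$, which is the ``direct'' bound you also mention. The uniqueness of the decomposition of $\Lambda(\proj{\nu_i})$ that you flag as the main obstacle is not actually needed: any convex decomposition works, because the uniqueness of the expansion of $\Lambda(\rho)$ already fixes $\alpha_j(\Lambda(\rho))=\sum_i p^i_j\alpha_i(\rho)$.
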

We refer to the Appendix for the proof of the proposition.
An immediate observation from this proposition is that we can write $\mathcal{M}(\rho)= \min_{\sigma\in F_{s}} \sum_{i}|\alpha_{i}^{-}(\rho)-\alpha_{i}^{-}(\sigma)|$ and further that $D_{N}(\rho, \sigma):=\sum_{i}|\alpha_{i}^{-}(\rho)-\alpha_{i}^{-}(\sigma)|$ satisfies the properties of a quasi-distance.

\section{Beyond single systems}

In the previous sections we considered free states of a single quantum system, $\mathcal{H}$, and free transformations acting on that system. In this section we extend the framework of PQRTs to composite systems, and transformations between different systems. This follows the ``categorical'' approach to resources developed in Ref.~\cite{Resource_Coecke} where the essential idea is that if we have two different free resources $\sigma_1$ and $\sigma_2$ then the composite $\sigma_1\otimes\sigma_2$ should also be free, similarly, if we have two different free operations $F_1$ and $F_2$ then applying these in sequence $F_2\circ F_1$ and in parallel $F_1\otimes F_2$ should both be free operations. Abstractly this can be captured by the idea that we have an enveloping theory consisting of all quantum operations, with a specified subtheory of free operations -- mathematically, the enveloping theory can be taken as the symmetric monoidal category of CPTP maps, and with a specified symmetric monoidal subcategory of free transformations. 

\begin{defn}[Compositional Polytopic Quantum Resource Theory] 
For each quantum system $\mathcal{H}$, where $\mathcal{H}$ is a finite-dimensional Hilbert space, we choose $S_{ver}^\mathcal{H}:=\{\nu_i : \nu_i \in D(\mathcal{H}), i\in\{1,...,N_\mathcal{H}\})\}$. Such that if $\mathcal{H}=\mathcal{K}\otimes \mathcal{L}$ then $S_{ver}^\mathcal{K}\otimes S_{ver}^\mathcal{L} \subset \text{conv}(S_{ver}^\mathcal{H})$. We then define the resource theory as:
\begin{itemize}
    \item {\bf Set of Free States:} The set of free states for a system $\mathcal{H}$ is
    \beq F_s^\mathcal{H}=\text{conv}(S_{ver}^\mathcal{H})
    \eeq
    \item {\bf Set of Free Operations from $\mathcal{H}$ to $\mathcal{K}$:} We consider the set of free operations $F_O^{\mathcal{H}\to\mathcal{K}}$ to be maximal, i.e., 
    \begin{align}
F&_O^{\mathcal{H}\to\mathcal{K}} := \nonumber \\
&\left\{ \Lambda \in \textnormal{CPTP}^{\mathcal{H}\to\mathcal{K}} \middle| \Lambda\otimes \mathbb{I}_\mathcal{L} (\sigma) \in F_S^{\mathcal{K}\otimes \mathcal{L}}, \forall \sigma \in F_S^{\mathcal{H}\otimes\mathcal{L}}\right\}.
     \end{align}
\end{itemize}
\end{defn}

It immediately follows from this definition that the free operations form a symmetric monoidal subcategory of the category of CPTP maps. That is, that the identity and swap CPTP maps are free, and that sequential and parallel composition of free transformations are themselves free operations. This is crucial for being able to leverage any of the results in Ref.~\cite{Resource_Coecke}. Note that the assumption, that if $\mathcal{H}=\mathcal{K}\otimes \mathcal{L}$ then $S_{ver}^\mathcal{K}\otimes S_{ver}^\mathcal{L} \subset \text{conv}(S_{ver}^\mathcal{H})$, is crucial for this, as is the assumption in our definition of free operations that they preserve free states in a ``complete'' sense, that is, even when just acting locally on part of a bipartite system.

The fact that free operations are characterised by being completely-preserving of free states, however, is a challenge to our characterisation via a linear program, as, in general at least, we must check that we obtain a valid pure state for every $\nu\in \bigcup_{\mathcal{L}}S_{ver}^{\mathcal{H}\otimes \mathcal{L}}$ which will be an infinite set in all but trivial cases. Nonetheless, there are special cases in which this characterisation does simplify back to a linear program. For example, if we have $S_{ver}^{\mathcal{H}\otimes\mathcal{L}}=S_{ver}^\mathcal{H}\otimes S_{ver}^\mathcal{L}$ then our completely-preserving condition simplifies to a preserving condition, namely,  just that $\Lambda(\nu_i)\in F_S^\mathcal{K}$ for all $\nu_i \in F_S^\mathcal{H}$. This is exactly the same as the previous linear program.

Next we can see how to define a notion of CPTP-Homomorphism at this compositional level.
\begin{defn}(CPTP-Homomorphism of Compositional Polytopic Resource Theories)
Let 
\[
\mathcal{R}_1=(\{F_{s,1}^{\mathcal{H}_1}\},\{F_{O,1}^{\mathcal{H}_1\to \mathcal{K}_1}\}),\quad \mathcal{R}_2=(\{F_{s,1}^{\mathcal{H}_2}\},\{F_{O,1}^{\mathcal{H_2\to K_2}}\})
\]
be two compositional PQRTs.

We say that $\mathcal{R}_1$ is \textbf{CPTP-homomorphic}  to $\mathcal{R}_2$ if there exists maps
\begin{align*} \xi&:\{\mathcal{H}_1\} \to \{\mathcal{H}_2\},\\ M_{\mathcal{H}}&:\mathcal{H}\to\xi(\mathcal{H}),\\  Z&:F_{O,1}^{\mathcal{H\to K}}\to F_{O,2}^{\xi(\mathcal{H})\to\xi(\mathcal{K})}
\end{align*}
such that the following conditions hold:
\begin{itemize}
    \item[\textnormal{(i)}] \textbf{State Homomorphism:} $M_\mathcal{H}:\mathcal{B(H)}\to \mathcal{B}(\xi(\mathcal{H}))$ is a CPTP map such that 
    \[M_\mathcal{H}(F_{s,1}^\mathcal{H})\subseteq F_{s,2}^{\xi(\mathcal{H})}.\]
    \item[\textnormal{(ii)}] \textbf{Operation Homomorphism:} For every $\Lambda \in F_{O,1}^{\mathcal{H}\to\mathcal{K}}$,
    \[Z(\Lambda)\in F_{O,2}^{\xi(\mathcal{H})\to\xi(\mathcal{K})}.\]
    We note that $Z(\Lambda)$ is CPTP.
    \item[\textnormal{(iii)}] \textbf{Intertwining Condition:} For all $\Lambda\in F_{O,1}^{\mathcal{H}\to\mathcal{K}}$ and all $\rho\in D(\mathcal{H})$,
    \[Z(\Lambda)(M_\mathcal{H}(\rho))=M_\mathcal{K}(\Lambda(\rho)).\]
    \item[\textnormal{(iv)}] \textbf{Compositionality Conditions:}
    \[ M_{\mathcal{H}\otimes\mathcal{K}}=M_\mathcal{H}\otimes M_\mathcal{K},\]
    \[Z(\Lambda_2\circ\Lambda_1)=Z(\Lambda_2)\circ Z(\Lambda_1),\]
    \[Z(\Lambda_2\otimes\Lambda_1)=Z(\Lambda_2)\otimes Z(\Lambda_1).\]
\end{itemize}
In the above definition, if furthermore the maps are all invertible and the inverses satisfy these conditions, then we say that $\mathcal{R}_1$ and $\mathcal{R}_2$ are \textbf{CPTP-isomorphic}.
\end{defn}
This definition, on the one hand, can be seen to be a generalisation of the definition in the single system case, and, on the other hand, naturally falls out from the categorical point of view  \footnote{Specifically, if we view these resource theories as being defined by symmetric monoidal subcategories, $\textrm{F}_i$ $i\in\{1,2\}$, of the category of CPTP maps giving us monoidal inclusion functors $\iota_{i}:\textrm{F}_i\to \textrm{CPTP}$, then a homomorphism is by a monoidal functor $Z:\textrm{F}_1\to\mathrm{F}_2$ together with a monoidal natural transformation $M:\iota_1\Rightarrow (\iota_2\circ Z)$. This defines a category $\mathbf{Sub}[\textrm{CPTP}]$ where objects are $\iota:\textrm{F}\to \textrm{CPTP}$ and morphisms are pairs $(Z,M)$, the notion of \textbf{CPTP-isomorphic} resource theories is simply isomorphism of objects within this category. See Ref.~\cite{Operadic_Resource_Theories} for more details.}.

\section{Discussion}
In this work, we unified a large class of resource theories under a single framework, in particular, any resource theory where the set of free states is a convex mixture of a finite number of quantum states. We introduced a novel tensorial framework for representing PQRTs that uncovers their underlying geometric structure and sheds light on the origin of the associated resources. We further addressed a fundamental question in resource theories: under what conditions two theories should be considered physically equivalent. To this end, we introduced the notions of homomorphism and isomorphism, which compare both the structure of free states and the set of allowed transformations. Using the tools we developed in this work, we obtained results that uncover the geometric and structural principles underlying these resource theories. As an unexpected result, we found that for any two resource theories with a fixed number of pure extremal states, there exists a CP-isomorphism between the two theories. We also extended the framework to multipartite systems and studied the categorical structure of PQRTs beyond single systems. Additionally, we also introduce a PQRT called resource theory of ``basis no-convexity'' where the extremal free states are linearly independent and thus form a basis. In all the above resource theories, the set of free states is convex, and thus the corresponding resource provides an advantage in a channel discrimination task \cite{adesso}. Moreover, a large class of them also show an advantage in a prepare-and-measure scenario where no component of the experiment is trusted except the dimension \cite{Sarkar_GQRT}.

Several follow-up problems arise from our work. The most important one will be whether a similar equivalence as above among resource theories with continuous extremal free states can be established or not. In the spirit of the works \cite{naseri2022entanglement,naseri2025quantum,fellous2025scalable}, a promising direction is to apply the PQRT representation introduced in this work to the study of quantum resources in quantum computing. Such representations highlight the geometric structure underlying these theories and provide deeper insight into the origin and nature of quantum resources. Another interesting avenue could be to generalise the notions equivalence beyond quantum resource theories to resource theories in a generalised probabilistic theory. A difficult problem in this direction would be to demonstrate equivalences between some non-convex resource theories in a similar manner as done in this work. Finally, pushing forwards the notion of resource theory homomorphism and isomorphism to broader settings will be important for the study and classification of quantum resources.

\section{Acknowldegments}

 S.S. and C.S. acknowledge the
support of the National Science Centre, Poland, under grant Opus 25 no. UMO-2023/49/B/ST2/02468.
 JHS and MN were funded by the European Commission by the QuantERA project ResourceQ under the grant agreement UMO2023/05/Y/ST2/00143. 
MN would like to thank Jan Peszek (Institute of Applied Mathematics and Mechanics,
University of Warsaw) for providing insights related to the functional analytic aspects of this work. JHS would like to thank Gaurang Agrawal and Matt Wilson for insightful discussions on compositional aspects of this work.

\bibliography{references}

\appendix

\section{Fidelity-Based SDP for Distinguishing Free Operations in PQRTs}

A straightforward way to characterise free operations can be summarised in the following lemma.
\begin{lem}\label{lem1}
    Consider $\Gamma \in \text{CPTP}$ with the Kraus representation $\{L_{i}^{\Gamma}\}$. If $\forall \nu_l \in S_{ver}^{N}$ there exists a set $\{p_{\Gamma,{m}}^{l}: p_{\Gamma,{m}}^{l}\geq 0, \sum_m p_{\Gamma,{m}}^{l}=1, m\in \{1,...,N\}\}$ such that:
    \begin{align}
        \sum_i L_{i}^{\Gamma}\nu_{l}L_{i}^{\Gamma\dagger}=\sum_m p_{\Gamma,{m}}^{l}\rho_{m},
    \end{align}
    then $\Gamma\in F_O$.
\end{lem}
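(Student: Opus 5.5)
The statement reduces directly to Lemma~\ref{lem1new} once the Kraus form of $\Gamma$ is unpacked. I read the $\rho_m$ on the right-hand side as the vertexal states $\nu_m\in S_{ver}^N$; this is the only reading under which the hypothesis refers to the free set. With that reading the plan is as follows.

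First, since $\{L_i^\Gamma\}$ is a Kraus representation of $\Gamma$, we have
\[\Gamma(\nu_l)=\sum_i L_i^\Gamma \nu_l L_i^{\Gamma\dagger}\]
for every vertex. The hypothesis then says that, for each $l$,
\[\Gamma(\nu_l)=\sum_m p^l_{\Gamma,m}\nu_m ,\]
where $\{p^l_{\Gamma,m}\}_m$ is a probability distribution. A convex combination of vertexal states lies in $\text{conv}(S_{ver}^N)=F_s$, so $\Gamma(\nu_l)\in F_s$ for all $\nu_l\in S_{ver}^N$.

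Second, I invoke Lemma~\ref{lem1new}. For a CPTP map, $\Gamma(\nu_l)\in F_s$ for all vertices is equivalent to $\Gamma(\sigma)\in F_s$ for all $\sigma\in F_s$. The underlying argument is simple: write $\sigma=\sum_l q_l\nu_l$ and use linearity to get
\[\Gamma(\sigma)=\sum_l q_l\,\Gamma(\nu_l),\]
which is a convex combination of elements of the convex set $F_s$ and therefore lies in $F_s$. Because $\Gamma$ is CPTP by assumption, the definition of $F_O$ gives $\Gamma\in F_O$.

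Nothing here is a genuine obstacle. The only point needing care is the notational one above: if $\rho_m$ were arbitrary states, the hypothesis would not imply membership in $F_s$. I would therefore state explicitly that $\rho_m\equiv\nu_m$, matching the linear-program corollary, before carrying out the two steps. The converse is also immediate from the same two facts, so the lemma is in fact an equivalence, although only the stated direction is needed.
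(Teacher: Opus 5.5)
Your proposal is correct and follows essentially the same route as the paper: write an arbitrary free state as $\sigma=\sum_j P^\sigma_j\nu_j$, push $\Gamma$ through by linearity, and observe that the result is a convex combination of vertices. The paper writes this out explicitly as $\sum_{j,m}P^\sigma_j p^j_{\Gamma,m}\nu_m$ rather than citing Lemma~\ref{lem1new}. Your identification $\rho_m\equiv\nu_m$ is also the reading the paper's own proof uses; it switches between the two symbols.
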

This lemma states that if a CPTP map sends every vertexal state to a free state, then it is necessarily a free map; otherwise, it cannot be considered free.
\begin{proof}
    We must show that the channel $\Gamma$ with the given Kraus representation maps any free state to another free state. Let $\sigma$ be an arbitrary free state and $\sigma=\sum_j P_{j}^{\sigma} \nu_{j}$ where $\{P_{j}^{\sigma}\}$ is a probability distribution depending on the state $\sigma$. We have:
    \begin{align}
        \sum_i L_{i}^{\Gamma}\sigma L_{i}^{\Gamma\dagger}=\sum_j P_{j}^{\sigma}\sum_{i}L_{i}^{\Gamma}\nu_{j}L_{i}^{\Gamma \dagger}\notag \\
        =\sum_{j,m}P_{j}^{\sigma}p_{\Gamma,{m}}^{j}\nu_{m}.
    \end{align}
    It is clear that $\{P_{j}^{\sigma}p_{\Gamma,{m}}^{j}\}_{j,m}$ is a probability distribution, thus $\sum_{j,m}P_{j}^{\sigma}p_{\Gamma,{m}}^{j}\rho_{m} \in F_s$.
\end{proof}
Lemma~\ref{lem1} can also be reformulated as a semidefinite program using the fidelity function in the form of an optimization problem. Thus, we have the following corollary in this regard.
\begin{cor}\label{cor:fidelity-sdp}
Let $\Gamma$ be a CPTP map. Then $\Gamma\in F_O$ if and only if for every $\nu_{i}\in S_{ver}$ with $|S_{ver}|=N$, the optimal value of the following semidefinite program is equal to $1$:
\begin{align}
\max_{\{P_j^{(i)}\},\,X_i} \quad & \frac12 \Tr(X_i+X_i^\dagger) \\
\text{s.t.}\quad &
\begin{pmatrix}
\Gamma(\nu_i) & X_i\\
X_i^\dagger & \sum_{j=1}^N P_j^{(i)}\nu_j
\end{pmatrix}\ge 0, \notag\\
& P_j^{(i)}\ge 0 \qquad \forall j, \notag\\
& \sum_{j=1}^N P_j^{(i)}=1. \notag
\end{align}
\end{cor}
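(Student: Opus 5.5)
The statement to prove is Corollary~\ref{cor:fidelity-sdp}. The plan is to reduce it, vertex by vertex, to the characterisation already proved: by Lemma~\ref{lem1new} (equivalently Lemma~\ref{lem1}), $\Gamma\in F_O$ iff $\Gamma(\nu_i)\in F_s=\text{conv}(S_{ver})$ for every $i$. It then suffices to show, for a fixed $i$, that the optimal value of the displayed program equals $1$ iff $\Gamma(\nu_i)\in F_s$.

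\textbf{Step 1: the SDP computes a maximal fidelity.} Fix the distribution $\{P^{(i)}_j\}$ and set $\sigma=\sum_j P^{(i)}_j\nu_j\in F_s$. The inner maximisation over $X_i$ is then the standard Watrous SDP for the root fidelity:
\[
\max\Bigl\{\tfrac12\Tr(X+X^\dagger):\begin{pmatrix}\rho & X\\ X^\dagger & \sigma\end{pmatrix}\ge0\Bigr\}=\|\sqrt{\rho}\sqrt{\sigma}\|_1=\sqrt{\mathcal F(\rho,\sigma)},
\]
with $\rho=\Gamma(\nu_i)$. I would recall the short argument rather than re-derive it in full.

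For the upper bound, positivity of the block matrix gives $X=\sqrt{\rho}\,K\sqrt{\sigma}$ with $\|K\|_\infty\le1$. Hence $\mathrm{Re}\Tr X\le\|\sqrt\rho\sqrt\sigma\|_1$.

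For attainment, take the polar decomposition $\sqrt\sigma\sqrt\rho=U|\sqrt\sigma\sqrt\rho|$. Setting $X=\sqrt\rho\,U^\dagger\sqrt\sigma$ makes the block matrix equal to $R^\dagger R$ with $R=(U\sqrt\rho,\ \sqrt\sigma)$, and this choice attains the bound.

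The joint maximisation over $P^{(i)}$ and $X_i$ is still an SDP, because $\sigma$ depends linearly on $P^{(i)}$ and the constraints on $P^{(i)}$ are linear. Its optimal value is therefore $\max_{\sigma\in F_s}\sqrt{\mathcal F(\Gamma(\nu_i),\sigma)}$, and the maximum is attained because $F_s$ is compact and the fidelity is continuous.

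\textbf{Step 2: the value equals $1$ exactly when $\Gamma(\nu_i)$ is free.} Fidelity between normalised states is at most $1$, with equality iff the two states coincide. So the optimum equals $1$ iff some $\sigma\in F_s$ satisfies $\sigma=\Gamma(\nu_i)$, that is, iff $\Gamma(\nu_i)\in F_s$. Both operators are unit-trace: $\Gamma$ is trace preserving and the $P^{(i)}$ sum to one. Combining over all $i$ with Lemma~\ref{lem1new} gives the corollary in both directions.

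The main obstacle is Step 1, specifically attainment in the block-matrix characterisation of fidelity when $\rho$ or $\sigma$ is rank-deficient. The polar-decomposition construction above handles this without requiring invertibility, so I would present that construction explicitly and cite Watrous for the general statement. Everything else is bookkeeping.
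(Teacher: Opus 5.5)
Your proof is correct and follows essentially the same route as the paper. Both reduce, via the vertex-check lemma, to deciding whether $\Gamma(\nu_i)\in F_s$, identify the program as the maximal fidelity over $F_s$ using the Watrous block-matrix characterisation, and conclude because fidelity between density operators equals $1$ exactly when they coincide. Your explicit handling of attainment (compactness of $F_s$ and the polar-decomposition witness for $X$) fills in a step the paper's converse direction uses only implicitly. Your $\sqrt{\mathcal F}$ is the paper's $\mathcal F$, since the paper defines $\mathcal F$ as the root fidelity.
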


\begin{proof}
We use two facts. First, by Lemma~\ref{lem1}, a CPTP map $\Gamma$ belongs to $F_O$ if and only if it maps every vertex state $\rho_i$ into the free set $F_s$. Second, for any density operators $A$ and $B$, the function $\mathcal{F}(A,B) =\Tr\sqrt{\sqrt{A}B\sqrt{A}}$ admits the SDP representation
\begin{align}
\mathcal{F}(A,B)=\max_X \quad & \frac12 \Tr(X+X^\dagger) \\
\text{s.t.}\quad &
\begin{pmatrix}
A & X\\
X^\dagger & B
\end{pmatrix}\ge 0. \notag
\end{align}
Moreover, $0\le \mathcal{F}(A,B)\le 1$, and $\mathcal{F}(A,B)=1$ if and only if $A=B$ \cite{watrous2018theory}.

Assume first that $\Gamma\in F_O$. Since $\nu_i\in F_s$ for every $i$, it follows that $\Gamma(\nu_i)\in F_s$. Hence, for each $i$, there exists a probability distribution $\{P_j^{(i)}\}_{j=1}^N$ such that
\[
\Gamma(\nu_i)=\sum_{j=1}^N P_j^{(i)}\nu_j.
\]
Therefore
\[
\mathcal{F}\!\left(\Gamma(\nu_i),\sum_{j=1}^N P_j^{(i)}\nu_j\right)=1.
\]
By the SDP representation of fidelity, the corresponding SDP has feasible value $1$. Since fidelity between density operators is always bounded above by $1$, the optimal value of the SDP is exactly $1$.

Conversely, assume that for every $i$, the optimal value of the above SDP is $1$. Then for each $i$ there exists a probability distribution $\{P_j^{(i)}\}_{j=1}^N$ such that
\begin{equation*}
\mathcal{F}\!\left(\Gamma(\nu_i),\sum_{j=1}^N P_j^{(i)}\nu_j\right)=1.
\end{equation*}
Since $\Gamma(\nu_i)$ and $\sum_{j=1}^N P_j^{(i)}\nu_j$ are both density operators, the equality condition for fidelity implies
\begin{equation*}
\Gamma(\nu_i)=\sum_{j=1}^N P_j^{(i)}\nu_j \in F_s.
\end{equation*}
Thus $\Gamma$ maps every vertex state into $F_s$. By Lemma~\ref{lem1}, this implies that $\Gamma\in F_O$.
\end{proof}

\section{Proof of corollary \ref{Transformation-Check-SDP}}
By definition, the transformation $\rho\to\sigma$ is possible in the PQRT $\mathcal R$ if and only if there exists a free operation $\Lambda\in F_O$ such that
\begin{align}
    \Lambda(\rho)=\sigma.
\end{align}
Using the definition of free operations,
\begin{align}
    F_O=\{\Lambda\in \mathrm{CPTP}: \Lambda(\tau)\in F_s \ \forall \tau\in F_s\},
\end{align}
this is equivalent to the existence of a CPTP map $\Lambda$ satisfying
\begin{align}
    \Lambda(\rho)&=\sigma,\\
    \Lambda(\tau)&\in F_s \qquad \forall \tau\in F_s.
\end{align}
By Lemma~\ref{lem1new}, the condition that $\Lambda$ maps every free state into $F_s$ is equivalent to checking it only on the vertexal states:
\begin{align}
    \Lambda(\nu_i)\in F_s \qquad \forall i.
\end{align}
Hence, $\rho\to\sigma$ is possible if and only if there exists a CPTP map $\Lambda$ such that
\begin{align}
    \Lambda(\rho)&=\sigma,\\
    \Lambda(\nu_i)&\in F_s \qquad \forall i.
\end{align}

To see that this is an SDP feasibility problem, let $J_\Lambda$ denote the Choi operator of $\Lambda$ \cite{watrous2018theory}. Then $\Lambda$ is completely positive if and only if
\begin{align}
    J_\Lambda \geq 0,
\end{align}
and $\Lambda$ is trace preserving if and only if
\begin{align}
    \operatorname{Tr}_{\mathrm{S'}}(J_\Lambda)=\mathbb{I}_{\mathrm{S}}.
\end{align}
where $S'$ denotes the auxiliary copy of the system $S$ in the definition of $J_{\Lambda}$ \cite{watrous2018theory}. These are semidefinite and affine constraints on $J_\Lambda$. Moreover, the constraint
\begin{align}
    \Lambda(\rho)=\sigma
\end{align}
is linear in $J_\Lambda$, while the conditions
\begin{align}
    \Lambda(\nu_i)\in F_s
\end{align}
can be written as
\begin{align}
    \Lambda(\nu_i)&=\sum_j p_j^i \nu_j,\\
    p_j^i&\geq 0,\\
    \sum_j p_j^i&=1,
\end{align}
which are again linear constraints. Therefore, the transformation problem is an SDP feasibility problem.
\endproof

\section{Proof of Theorem \ref{PQRT-Rep} }
Let $\mathcal R$ be a PQRT whose free states are generated by the vertex set $S_{\rm ver}=\{\rho_i\}_{i\in I}\subset D(\mathcal H)$, where $\mathcal H$ is a finite-dimensional Hilbert space. Let
$d=\dim(\mathcal H)$. Choose an orthonormal basis $B=\{\ket{k}\}_{k=1}^{d}$ of $\mathcal H$. Let $V=\mathrm{span}(B)$ be the corresponding abstract
complex vector space, and let $B'=\{f_k\}_{k=1}^{d}$
be the dual basis satisfying
\begin{equation*}
    f_k(\ket{\ell})=\delta_{k\ell}.
\end{equation*}
Thus $B'$ induces the standard scalar product for which $B$ is orthonormal. Therefore $\mathcal H$ is identified with the Hilbert-space realization $(V,B')$.

Now consider an arbitrary vertex state $\rho_i$. Since $\rho_i$ is a positive
semidefinite operator on a finite-dimensional Hilbert space, it admits a spectral decomposition $\rho_i=\sum_{k=1}^{d_i} \lambda_k^i
\ket{\psi_k^i}\bra{\psi_k^i}$,
where $\lambda_k^i\geq 0$, $\sum_k\lambda_k^i=1$, and
$d_i\leq d$. Define
\begin{equation*}
W_k^i:=\lambda_k^i\ket{\psi_k^i}\bra{\psi_k^i}.
\end{equation*}
Then each $W_k^i$ is positive semidefinite and has rank at most one. Hence $\rho_i=\sum_{k=1}^{d_i}W_k^i$.
Extending the sum by zero terms if necessary, we may write
\begin{equation*}
V_i:=\rho_i=\sum_{k=1}^{d}W_k^i,
\qquad \operatorname{rank}(W_k^i)\leq 1 .
\end{equation*}
Thus every vertex state $\rho_i$ is represented by an element
\begin{equation*}
    V_i=\sum_{k=1}^{d}W_k^i\in \mathcal{D}(V),
\end{equation*}

Therefore the whole vertex set $S_{\rm ver}$ is equivalently represented by
the collection
\[
\{V_i=\sum_{k=1}^{d}W_k^i\in \mathcal{D}(V):
\operatorname{rank}(W_k^i)\leq 1\}_{i\in I}.
\]
Together with the basis $B$ of $V$ and the dual basis $B'$ inducing the
Hilbert-space geometry, this gives the desired representation
\[
\left(
\{V_i=\sum_{k=1}^{d}W_k^i\in D(V):
\operatorname{rank}(W_k^i)\leq 1\}_{i\in I},
B,B'
\right).
\]

It remains to show uniqueness up to unitary transformations. Suppose that the
same PQRT is represented by another triple
\[
\left(
\{\widetilde V_i\}_{i\in I},\widetilde B,\widetilde B'
\right).
\]
The two choices $B$ and $\widetilde B$ are orthonormal bases of Hilbert-space
realizations of the same finite-dimensional vector space (with respect to $B'$ and $\widetilde{B'}$). Hence there exists a
unitary operator $U$ such that
\begin{equation*}
    U\ket{k}=\ket{\widetilde k}
\end{equation*}
for every basis element. Consequently, the corresponding density operators
satisfy
\begin{equation*}
    \widetilde V_i=U V_i U^\dagger
\end{equation*}
for all $i\in I$. Thus any two such representations differ only by a unitary change of basis.
Hence the representation is unique up to unitary transformation.
\endproof

\section{Proof of Theorem \ref{No-Iso}}

  Assume for the sake of contradiction that there exists a CPTP isomorphism $(M,Z)$ between $\mathcal{R}_{1}$ and $\mathcal{R}_{2}$. To begin, take an arbitrary state $\proj{v}$. We claim $M(\proj{v})$ is a pure state. Assume for the sake of contradiction that $M(\proj{v})$ is a mixed state. Therefore one can write:
      \begin{equation}
          M(\proj{v})=P\sigma + (1-P)\rho_{ver,2}^{i}
      \end{equation}
      where $0<P<1$ and $\rho_{ver,2}^{i}$ is a vertexal state in $\mathcal{R}_{2}$. Since $M$ is injective and linear, we have:
      \begin{equation}
          M^{-1}(P\sigma + (1-P)\rho_{ver,2}^{i})=P M^{-1}(\sigma)+(1-P)M^{-1}(\rho_{ver,2}^{i}).
      \end{equation}
      Note that $M^{-1}$ is linear on the range of $M$. By the definition of the isomorphism we must have $M^{-1}(\rho_{ver,2}^{i})=\rho_{ver,1}^{i}$ as the map $M$ must map the vertexal states to the vertexal states. Thus, we can write:
      \begin{equation}
          P M^{-1}(\sigma)+(1-P)\rho_{ver,1}^{i}=\proj{v}.
      \end{equation}
      But the right hand side is a rank-1 state which is a contradiction. Hence, $M(\proj{v})$ is pure.

     Next, take  an arbitrary state $\proj{v}$ with $\ket{v}=\frac{\ket{0}+\ket{1}}{\sqrt{2}}$, where $\{\ket{0},\ket{1}\}$ forms an orthonormal system and $\mathcal{H}_{2\text{-}D} := \operatorname{span}\{\ket{0},\ket{1}\}$ is a $2$-dimensional subspace of the vector space. We may consider the Pauli matrices for this subspace. We have
\begin{equation*}
\proj{v}
= \frac{1}{2}\proj{0}
+ \frac{1}{2}\proj{1}
+ \frac{1}{2}\bigl(\ket{0}\!\bra{1}+\ket{1}\!\bra{0}\bigr)
\end{equation*}
and we can represent the term $\ket{0}\!\bra{1}+\ket{1}\!\bra{0}$ by the
Pauli matrices acting on $\mathcal{H}_{2\text{-}D}$ as follows:
\begin{equation*}
\ket{0}\!\bra{1}+\ket{1}\!\bra{0}
= \sigma_x
= \proj{+} - \proj{-}.
\end{equation*}
Thus, we can write
\begin{equation*}
\proj{v}
= \frac{1}{2}\proj{0}
+ \frac{1}{2}\proj{1}
+ \frac{1}{2}\bigl(\ket{+}\!\bra{+}-\ket{-}\!\bra{-}\bigr).
\end{equation*}
Now let us apply $M$ on this state:
\begin{align*}
    M(\proj{v})
=
\frac{1}{2}&M(\proj{0})
+\frac{1}{2}M(\proj{1})
\\
&+\frac{1}{2}M(\ket{+}\!\bra{+})
-\frac{1}{2}M(\ket{-}\!\bra{-}).
\end{align*}
Since $M$ maps any pure state to a pure state, we must have
\begin{align*}
M(\proj{v})
&=
\frac{1}{2}\proj{0'}
+\frac{1}{2}\proj{1'}
\\&\quad +\frac{1}{2}\ket{+'}\!\bra{+'}
-\frac{1}{2}\ket{-'}\!\bra{-'}\\
&=
\proj{\tilde v}.
\end{align*}
and
\begin{align*}
\frac{1}{2}\proj{0'}
+\frac{1}{2}\proj{1'}
+\frac{1}{2}\proj{+'}
=
\proj{\tilde v}
+\frac{1}{2}\ket{-'}\!\bra{-'}.
\end{align*}
The right-hand side is at most rank $2$. Assume it is rank $1$.
Then the left-hand side must also be rank $1$, which means $M(\proj{0}) = M(\proj{+})$,
which is a contradiction because $M$ is injective.
Therefore, $\proj{\tilde v}+\frac{1}{2}\proj{-'}
$
is rank $2$ and consequently
\begin{align*}
\frac{1}{2}\proj{0'}
+\frac{1}{2}\proj{1'}
+\frac{1}{2}\proj{+'}
\end{align*}
must be rank $2$. For this, a necessary condition is that $\proj{+'}
\in
\operatorname{span}\{
\ket{0'}\!\bra{1'},
\ket{1'}\!\bra{0'},
\ket{0'}\!\bra{0'},
\ket{1'}\!\bra{1'}
\}$. This means that also the right-hand side must belong to the same span.
Therefore, we conclude that $
\ket{v'}=\frac{\ket{0'}+\ket{1'}}{\sqrt{2}}
$ and $
\braket{0'}{1'} = 0$. Hence, by arbitrariness of $\ket{v}$, $M$ maps any orthonormal basis to an orthonormal basis. Thus $\mathcal{R}_1$ and $\mathcal{R}_2$ are isomorphic.

\section{Proof of Theorem \ref{CP-Iso}} \label{CP-Iso-App}
    We will construct a CP- isomorphism between the two arbitrary polytopic resource theories:
    \begin{align}
    S_{ver,1}^{N}&= \{\proj{\psi_{i}}: i\in\{1,...,N\}\} \notag \\
     S_{ver,2}^{N}&= \{\proj{\phi_{i}}: i\in\{1,...,N\}\}.
\end{align}
 By the proposition \ref{GeoEquiv-Pure}, we can equivalently consider that the first resource theory is defined via $S_{ver,2}^{N}$ in the Hilbert space defined by the pair $(V,B'')$. Note that in order to use the proposition, we need to require that $|S_{ver}|\leq d$. Let $D_{1}=\text{conv}(\ket{v}\otimes f_{v}: f_{v}(\ket{v})=1)$, $D_{2}=\text{conv}(\ket{w}\otimes g_{w}:g_{w}(\ket{w})=1)$ and $T$ be the map defined by $T(f_{i})=g_{i}$. We claim that the map $M:D_{1}\to D_{2}$ defined by:
\begin{align}
    M(\rho)=\frac{[I\otimes T](\rho)}{\Tr([I\otimes T](\rho))},\text{ }\forall \rho \in D_1
\end{align}
    is the isomorphism map for the states. Indeed the map is surjective as for any extremal state $\ket{v}\otimes g_{v}$ in $D_{2}$, there is $\alpha \ket{v}\otimes f_{v} \in D_1$ for some $\alpha >0$ which is mapped to $\ket{v}\otimes g_{v}$ by $I\otimes T$. Furthermore, the inverse of $M$ exists and is:
    \begin{align}
        M^{-1}(\sigma)=\frac{[I\otimes T^{-1}](\sigma)}{\Tr([I\otimes T^{-1}](\sigma))},\text{ }\forall \sigma \in D_2.
    \end{align}
    Note that $T$ is linear and invertible. Hence $M$ is a bijection and an isomorphism. 
    
    We will show that the map $Z$ defined by $Z(\Lambda):= M'\Lambda M^{'-1}$ is the operation CP-isomorphism where $M'=I\otimes T$. By the definition, we have:
    \begin{align}
        Z(\lambda)(M(\rho))\propto M(\Lambda(\rho)).
    \end{align}
   Note that $\Tr(I\otimes T(\cdot))$ is positive on $D_{1}$. It remains to show that $M'\Lambda M'^{-1}$ is completely positive. To this purpose, we use the Choi-Jamio{\l}kowski operator of $Z(\lambda)$ \cite{watrous2018theory}:
    \begin{align}
        I\otimes M'\lambda M'^{-1} (\ket{\Psi}_2\bra{\Psi})
    \end{align}
    where $\ket{\Psi}_{l}\bra{\Psi}=\sum \ket{i}\!\bra{j}_{l}\otimes \ket{i}\!\bra{j}_{l}$ and $\{\ket{i}\}$ is an orthonormal basis with respect to $\{\bra{j}_{l}\}$. As $M'$ is defined on the whole space (not only $D_1$) and is invertible, we can write:
    \begin{align}
        [M^{'-1}M'&\otimes M'\Lambda M'^{-1}] (\ket{\Psi}\bra{\Psi})\notag\\&=[M^{'}\otimes M'\Lambda] (\sum \ket{i}\!\bra{j}_{1}\otimes \ket{i}\!\bra{j}_{1}) \notag \\
        &=
        [M^{'}\otimes M^{'}\Lambda]\ (\ket{\Psi}_{1}\bra{\Psi})\notag \\
        &=[(M'\otimes M')(I\otimes \Lambda)](\ket{\Psi}_{1}\bra{\Psi}).
    \end{align}
    Since $I\otimes \Lambda$ is a CPTP map, we may generally write $(I\otimes \Lambda)(\ket{\Psi}_{1}\bra{\Psi})=\sum P_{K}\ket{K}_{1}\bra{K}$. Hence;
    \begin{align}
        (M'\otimes M')(\sum P_{K}\ket{K}_{1}\bra{K})=\sum P_{K}\ket{K}_{2}\bra{K}>0.
    \end{align}
    and the proof is complete.

\section{Proof of Theorem \ref{Stoch-Iso}}

Assume that $\rho$ can be converted to $\sigma$ in the first resource theory. Then there exists a free operation $\Lambda\in F_{O,1}$ such that $
\Lambda(\rho)=\sigma$. Since $(M,Z)$ is a CP-homomorphism, the intertwining condition implies that for every $\omega\in D(\mathcal H_1)$, $
Z(\Lambda)(M(\omega))=\alpha\,M(\Lambda(\omega))$
for some $\alpha>0$. In particular, for $\omega=\rho$ we get $Z(\Lambda)(M(\rho))=\alpha\,M(\sigma)$. Now define
\begin{align}   
c_\Lambda:=\max_{\tau\in D(\mathcal H_2)} \Tr\bigl(Z(\Lambda)(\tau)\bigr).
\end{align}
Since $D(\mathcal H_2)$ is compact and $\tau\mapsto \Tr(Z(\Lambda)(\tau))$ is continuous, the maximum exists.Define
\begin{align}
\widetilde{\Lambda}:=\frac{1}{c_\Lambda}Z(\Lambda).
\end{align}
This map is linear and completely positive. Moreover, for every $\tau\in D(\mathcal H_2)$,
\begin{align}
\Tr(\widetilde{\Lambda}(\tau))
=
\frac{1}{c_\Lambda}\Tr(Z(\Lambda)(\tau))
\le 1.
\end{align}
Hence $\widetilde{\Lambda}$ is trace-nonincreasing, so it is a valid stochastic operation. Finally,
\begin{align}
\widetilde{\Lambda}(M(\rho))
=
\frac{1}{c_\Lambda}Z(\Lambda)(M(\rho))
=
\frac{\alpha}{c_\Lambda}M(\sigma).
\end{align}
Therefore $\widetilde{\Lambda}$ converts $M(\rho)$ to $M(\sigma)$ with nonzero success probability $\alpha/c_\Lambda$. Hence $M(\rho)$ is stochastically convertible to $M(\sigma)$ in the second resource theory.

\section{Proof of Proposition \ref{Non-Existence-Golden}}
We prove the claim by counterexample. Consider the multipartite entanglement resource  theory with fixed number of parties, with free operations to be fully separability-preserving (FSP) maps. If a golden state existed in this theory, then there would be a unique state that can be transformed
by free operations into every other resource state. But this is not the case: in the multipartite FSP resource theory there is, in general,
no unique maximally entangled state. Consequently, no single state plays the role of a universal resource from which all other resource states are obtainable by free operations \cite{Contreras-Tejada_Palazuelos_deVicente_2019}.
\endproof

\section{Proof of Proposition \ref{Golden-Prop}}

      (1)  We define $\mathcal{F}_{M}(\rho):=\max_{\sigma \in F_{s}}
        \mathcal{F}(\rho,\sigma)$ where $\mathcal{F}(\cdot,\cdot)$ is the fidelity function. As our resource theory is convex with a compact set of free states, $\mathcal{F}_{M}$ is a Fidelity based monotone and it achieves its minimum $\mathcal{F}_{M,\min}$ at a pure state.  Assume the golden state $\rho_{G}$ is not pure, therefore $\mathcal{F}_{M}(\rho_{G})-\mathcal{F}_{M,\min}=\epsilon > 0$, this means that by implementing the free operations on $\rho_{G}$ one cannot reach to the states with $\mathcal{F}_{M}<\mathcal{F}_{M}(\rho_{G})$ which is a contradiction as we assumed that $\rho_{G}$ is a golden state.

        (2) According to the first part of the lemma, a golden state must satisfy:
        \begin{align}
            \mathcal{F}_{M}(\ket{\psi_{G}})&=\min_{\ket{\psi}}\max_{\{J_{j}\}}\sum_{i}J_{i}\braket{\psi|\rho_{i}}{\psi}\notag\\&=\min_{\ket{\psi}}\max_{\{J_{j}\}}\sum_{i}J_{i} \text{Tr}(\rho_{i}\ket{\psi}\!\bra{\psi})
        \end{align}
        where $\{J_{k}\}$ is a probability distribution over $\{\rho_{k}\in S_{ver}\}$. As $\text{Tr}(\cdot)$ is a linear function, we may equivalently write:
        \begin{align}
             \mathcal{F}_{M}(\ket{\Psi_{G}})=\min_{\rho}\max_{\{J_{j}\}}\sum_{i}J_{i} \text{Tr}(\rho_{i}\rho)
        \end{align}
        Let denote $\text{Tr}(\rho_{i}\rho):=C_{i}^{\rho}>0$. Note that for the purpose of this proposition, the states $\rho_{i}$ are pure. Thus, we have:
         \begin{align}
            \mathcal{F}_{M}(\ket{\Psi_{G}})=\min_{\rho}C_{\max}^{\rho}
        \end{align}
        and the golden state satisfies $
    \ket{\Psi_G} = \arg\min_{\psi} C_{\max}^\psi.$
        
        We now show that at the optimum all overlaps are equal. Suppose, for contradiction, that the values $\{C_i^{\Psi_G}\}$ are not all equal. Then there exists $k$ such that  $C_k^{\Psi_G} = C_{\max}^{\Psi_G} > C_j^{\Psi_G}$ for some $j$. Using continuity of the map $\ket{\psi} \mapsto (C_1^\psi,\dots,C_n^\psi)$ and the connectedness of the pure state space, there exists a small perturbation $\ket{\psi'}$ such that $C_k^{\psi'}$ decreases while the other components increase only infinitesimally. For sufficiently small perturbations, this implies $
    C_{\max}^{\psi'} < C_{\max}^{\Psi_G},
$ contradicting optimality of $\ket{\Psi_G}$.

\section{Proof of Proposition \ref{Instrument-Monotone} }
Again from Eq. \ref{albela}, there exist $\sigma\in F_s$ and $\delta\in D(\mathbb{C}^d)$ for a state $\rho\in D(\mathbb{C}^d)$ such that 
\begin{equation}\label{gale}
    \rho=(1+\mathcal{M}_R(\rho))\sigma-\mathcal{M}_R(\rho)\delta.
\end{equation}

This implies,
\begin{eqnarray}
\frac{\Gamma_n(\rho)}{\Tr(\Gamma_n(\rho))}&=&(1+\mathcal{M}_R(\rho))\frac{\Gamma_n(\sigma)}{\Tr(\Gamma_n(\sigma))}\frac{\Tr(\Gamma_n(\sigma))}{\Tr(\Gamma_n(\rho))}\notag\\&&-\mathcal{M}_R(\rho)\frac{\Gamma_n(\delta)}{\Tr(\Gamma_n(\delta))}\frac{\Tr(\Gamma_n(\delta))}{\Tr(\Gamma_n(\rho))}.
\end{eqnarray}

Thus,
\begin{eqnarray}
\frac{\Gamma_n(\sigma)}{\Tr(\Gamma_n(\sigma))}&=&\frac{\frac{\Gamma_n(\rho)}{\Tr(\Gamma_n(\rho))}+\mathcal{M}_R(\rho)\frac{\Tr(\Gamma_n(\delta))}{\Tr(\Gamma_n(\rho))}\frac{\Gamma_n(\delta)}{\Tr(\Gamma_n(\delta))}}{(1+\mathcal{M}_R(\rho))\frac{\Tr(\Gamma_n(\sigma))}{\Tr(\Gamma_n(\rho))}} \nonumber \\
&=&\frac{\frac{\Gamma_n(\rho)}{\Tr(\Gamma_n(\rho))}+\mathcal{M}_R(\rho)\frac{\Tr(\Gamma_n(\delta))}{\Tr(\Gamma_n(\rho))}\frac{\Gamma_n(\delta)}{\Tr(\Gamma_n(\delta))}}{(1+\mathcal{M}_R(\rho)\frac{\Tr(\Gamma_n(\delta))}{\Tr(\Gamma_n(\rho))})}.
\end{eqnarray}
The second equality follows from the equation obtained by applying $\Gamma_n$ on both sides of Eq. \eqref{gale}. 
Since $\frac{\Gamma_n(\sigma)}{\Tr(\Gamma_n(\sigma))}$ belongs to the free states, therefore, by definition Eq.~\eqref{albela},
\begin{eqnarray}
  && \mathcal{M}_R\left( \frac{\Gamma_n(\rho)}{\Tr(\Gamma_n(\rho))}\right)\leq \mathcal{M}_R(\rho)\frac{\Tr(\Gamma_n(\delta))}{\Tr(\Gamma_n(\rho))} \nonumber \\
   \implies && \sum_n \Tr(\Gamma_n(\rho))\mathcal{M}_R\left( \frac{\Gamma_n(\rho)}{\Tr(\Gamma_n(\rho))}\right)\leq \mathcal{M}_R(\rho).
\end{eqnarray}
\endproof

\medskip

\section{Proof of the Proposition \ref{Negativity-Monotone}}
For every free operation $\lambda\in F_O$ and every extremal free state
$\proj{\nu_i}$, we have
\begin{equation}
    \lambda(\proj{\nu_i})\in F_s=\operatorname{conv}\{\proj{\nu_j}\}_j .
\end{equation}
Hence there exist probabilities $p_{j|i}\geq 0$ with
$\sum_j p_{j|i}=1$ such that
$ \lambda(\proj{\nu_i})=\sum_j p_{j|i}\proj{\nu_j}$. Therefore, if $\rho=\sum_i\alpha_i(\rho)\proj{\nu_i}$, then
\begin{equation}
    \lambda(\rho)=\sum_i\alpha_i(\rho)\lambda(\proj{\nu_i})=\sum_{i,j}\alpha_i(\rho)p_{j|i}\proj{\nu_j}.
\end{equation}
By uniqueness of the expansion in the basis $\{\proj{\nu_j}\}_j$, we get $\alpha_j(\lambda(\rho))=\sum_i p_{j|i}\alpha_i(\rho)$. Thus the coefficient vector transforms as $ \alpha(\lambda(\rho))=P_\lambda \alpha(\rho)$, where $P_\lambda=(p_{j|i})_{j,i}$ is a column-stochastic matrix. Now write
\begin{equation}
     \alpha_i(\rho)=\alpha_i^+(\rho)+\alpha_i^-(\rho),
    \end{equation}
where $\alpha_i^+(\rho)\geq 0$ and $\alpha_i^-(\rho)\leq 0$. Since
$P_\lambda$ has non-negative entries, we have
\begin{equation}
    P_\lambda \alpha(\rho)
    = P_\lambda \alpha^+(\rho)+P_\lambda \alpha^-(\rho),
\end{equation}
with
\begin{equation}
    P_\lambda \alpha^+(\rho)\geq 0,
    \qquad
    P_\lambda \alpha^-(\rho)\leq 0 .
\end{equation}
Since $|P_{\lambda}\alpha^{-}(\rho)|\leq \sum_{k}|\alpha_{k}^{-}(\rho)|$ and a nonnegative vector can only decrease the absolute value of the negative parts, we must have
\begin{equation}
    \mathcal M_N(\lambda(\rho)) = -\sum_j \alpha_j^-(\lambda(\rho))
    \leq
    -\sum_j (P_\lambda \alpha^-(\rho))_j .
\end{equation}
Therefore
\begin{equation}
    \mathcal M_N(\lambda(\rho))\leq \mathcal M_N(\rho).
\end{equation}
\endproof

\end{document}